\def\dOi{13(4:30)2017}
\theoremstyle{plain}
\newcommand{\XC}{X_c}
\newcommand{\SX}{\widehat{X_c}}
\newcommand{\SK}{\widehat{K}}
\newcommand{\Ix}{Ix}
\newcommand{\NN}{\text{\rm \hbox{I\kern-.2em\hbox{N}}}}
\renewcommand{\O}{{\cal O}^e} 
\newcommand{\ee}{{effectively enumerable }}
\renewcommand{\O}{{\mathcal O}^e}
\begin{document}

\title[ The Rice-Shapiro Theorem in Computable Topology]{ The Rice-Shapiro Theorem in Computable Topology}

\author[M.~Korovina ]{Margarita Korovina\rsuper a}	
\address{{\lsuper a}A.P. Ershov Institute of Informatics Systems, SbRAS, Novosibirsk}	
\email{rita.korovina@gmail.com}  
\thanks{{\lsuper a}The research leading to these results has received funding
 from the DFG grants WERA MU 1801/5-1 and  CAVER BE 1267/14-1, the RFBR grant 17-01-00247}	

\author[O.~Kudinov]{Oleg Kudinov\rsuper b}	
\address{{\lsuper b}Sobolev Institute of Mathematics, SbRAS, Novosibirsk}	
\email{kud@math.nsc.ru}  



\keywords{computable topology, computable elements,  the Rice-Shapiro theorem}


\begin{abstract}
 We provide requirements on   \ee $T_0$--spaces which guarantee  that  the Rice-Shapiro theorem holds for  the computable elements of these spaces.
  We show that  the relaxation of these  requirements leads to the classes of \ee $T_0$--spaces where  the  Rice-Shapiro theorem does not hold. We propose two constructions that generate \ee $T_0$--spaces with particular properties from $wn$-families and computable trees without computable infinite paths. Using them we propose examples that give a flavor of this  class.
\end{abstract}

\maketitle

\section*{Introduction}\label{sec_intr}
The paper is a  part of the ongoing research program \cite{Korovina_mscs,Korovina_CiE15,Korovina_PSI14}  that follows the tradition started in \cite{Cenzer_98,Cenzer_99,Brodhead_08,Spreen98,Spreen95,Spreen96,Weihrauch_cpo} to merge classical recursion theory and computable analysis in order to study
 computable (effective) topology.

  One of the natural ideas in computable topology is an  adaptation of the results from classical computability (recursion) theory concerned with  the lattice of computably enumerable sets to computable analysis for  studying the structures of computable elements and effectively open sets  of topological  spaces.
  The major obstacle for the achievement of this goal is the different nature of discrete and continuous data. An approach to overcome these difficulties
   is to figure out which particular classes of topological spaces fit better for generalisations of classical results.
 Following this direction in this paper we consider the famous Rice-Shapiro theorem.
 In classical computability theory the theorem  provides a~clear charcterisation of computably enumerable index sets that
 leads to a~simple description of effectively enumerable properties of program languages in computer science.
  Historically,  generalisations of the Rice-Shapiro theorem have been first proven for  the algebraic domains  \cite{Ershov_1}, for the weakly effective $\omega$-continuous domains \cite{Spreen84} and later on for  the effectively pointed topological spaces that expand the weakly effective $\omega$-continuous  domains \cite{Spreen96}.

  In this paper as  promising candidates for the realisation of this program we  consider \ee $T_0$-spaces with  conditions on the family of basic neighborhoods of computable elements that guarantee the existence of a~prin\-cipal computable numbering. In \cite{Korovina_CiE15} we have  already shown that for this class of topological spaces generalisations of Rice's theorem hold.
  However, as we demonstrate in this paper, these conditions are not sufficient to establish the Rice-Shapiro theorem.
   Towards the  Rice-Shapiro theorem we enhance the requirements on   \ee $T_0$--spaces that result in  a new class called the class of modular $T_0$--spaces. It turns out that, on  the one hand, the Rice-Shapiro theorem holds for this class, on the other hand, this class extends the weakly effective $\omega$-continuous domains.
   The paper is organised  as follows.
  In Section~\ref{sec_preliminaries} we give basic notions and definitions. In Section~\ref{sec_comp_element} we recall the notion of a computable element and
   conditions on the family of basic neighborhoods of computable elements that guarantee the existence of a~prin\-cipal computable numbering.  Section~\ref{sec_Rice_Shapiro} contains the definition of  the class of  modular $T_0$-spaces and the proof of  the Rice-Shapiro theorem for this class. In Section~\ref{sec_constructions} we show a few general approaches for constructing \ee $T_0$-spaces with particular properties.  In Section~\ref{sec_Counterexamples} we use them to produce counterexamples.

\section{Preliminaries}\label{sec_preliminaries}
\subsection{Recursion theory}
We refer the reader to \cite{Rogers} and  \cite{Soarehbook} for  basic definitions and fundamental concepts of recursion theory
 and to \cite{Ershov_Num_73_1,Ershov_Num_75_2,Ershov_Num_2} for  basic definitions and fundamental concepts of numbering theory.
 We recall that, in particular, $\varphi_e$ denotes the partial  computable (recursive) function with  an index $e$ in the Kleene numbering.
 Let $\varphi^s_e ( x ) = \varphi_e ( x )$
 if computation requires not more than $s$ steps and $x \leq s$, otherwise $\varphi^s_e (x) = -1$. In this paper we also use the notations $W_e=\rm {dom}(\varphi_e)$, $W^s_e=\{x\mid \varphi^s_e (x) \geq 0\}$, and $\pi_e=\rm {im}(\varphi_e)$.
A~sequence $\{V_i\}_{i\in\omega}$ of computably enumerable   (c.e.) sets is {\em computable} if
 $\{(n,i)|\, n\in V_i\}$ is computably enumerable.
 It is worth noting that this is equivalent to
 the existence of a computable function  $f:\omega\to\omega$ such that $V_i=W_{f(i)}$.  We denote the canonical computable sequence of all finite sets as $\{D_n\}_{n\in\omega}$. A~sequence $\{V_i\}_{i\in\omega}$ of finite sets is {\em strongly computable} if there exists a computable function $h:\omega\to\omega$ such that $V_i=D_{h(i)}$ for  all $i\in\omega$.
 A~sequence $\{V_i\}_{i\in\omega}$ of finite sets is a  {\em presentation} of $V=W_e$ if $V_i=W^i_e$ for all $i\in\omega$.  It is clear that the sequence $\{V_i\}_{i\in\omega}$ is strongly computable, $V_s\subseteq V_{s+1}$ and $V=\bigcup_{i\in\omega} V_i$.
A numbering of a set $Y$ is a surjective map $\gamma:\omega\to Y$.
\subsection{Weakly effective $\omega$-continuous domains}
In this section we present some background on domain theory. The reader can find more
details  in \cite{Abramsky_94,SHLG94,Gierz}.
Let $D = (D;\bot,\leq)$ be a partial order with a least element $\bot$.
 A subset $A\subseteq D$ is \emph{directed}
if $A\neq \emptyset$ and $(\forall x,\,y \in A)(\exists z\in A)(x\leq z \wedge x\leq z)$. We say that $D$ is a \emph{directed complete partial order}, denoted dcpo, if any directed set  $A\subseteq D$ has a supremum in $D$, denoted $\bigsqcup A$. For two elements $x,\, y\in D$
we say $x$ is \emph{way-below} $y$,
denoted $x\ll y$, if whenever $y\leq \bigsqcup A$ for a directed set $A$, then there exists $a\in A$ such that $x\leq a$.
We say that $B\subseteq D$ is a \emph{basis} (base) for $D$ if for every $x\in D$
the set $approx_{B}(x)=\{y\in B| y\ll x\}$ is directed and $x=\bigsqcup approx_{B}(x)$.
We say that $D$ is \emph{continuous} if it has a basis; it is \emph{$\omega$--continuous} if it has countable basis.

\begin{defi}\cite{Gierz}
 Let $(D;B,\beta,\leq,\bot) $ be an $\omega$--continuous domain where $B$ is a basis, $\beta:\omega\to B$ is a numbering of the basis. We say that $D$ is a weakly effective if  the relation $\beta(i)\ll \beta(j)$ is computably enumerable.
\end{defi}
\begin{prop}[Interpolation Property]\cite{Abramsky_94,Gierz}\label{interpolation_prop}
Let $D$ be a continuous domain and let $M\subseteq D$ be a finite set that $(\forall a\in M)\, a\ll y$. Then there exists $x\in D $such that $M \ll x\ll y$
holds. If $B$ is a basis for $D$ then $x$ may be chosen from $B$.
\end{prop}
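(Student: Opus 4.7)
My plan is to reduce the whole statement to a single lemma about an auxiliary subset of the basis and then read off both the interpolant and the ``chosen from $B$'' refinement at once. Concretely, I would fix a basis $B$ for $D$ (which exists by continuity) and consider
\[
U \;=\; \{\, u\in B \mid (\exists v\in B)\; u\ll v\ll y\,\}.
\]
Granted that $U$ is directed and $\bigsqcup U = y$, the rest is routine: each $a\in M$ satisfies $a\ll y=\bigsqcup U$, so the defining property of $\ll$ gives $a\le u_a$ for some $u_a\in U$; directedness of $U$ together with finiteness of $M$ then yields a single $u\in U$ with $u_a\le u$ for all $a\in M$. By the definition of $U$ one has $u\ll v\ll y$ for some $v\in B$, and $a\le u_a\le u\ll v$ gives $a\ll v$ for every $a\in M$. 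Thus $v\in B$ is the required interpolant with $M\ll v\ll y$.

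The technical heart is therefore the directedness of $U$ and the identity $\bigsqcup U=y$. For directedness: given $u_i\ll v_i\ll y$ with $u_i,v_i\in B$ ($i=1,2$), directedness of $approx_B(y)$ yields $v\in approx_B(y)\subseteq B$ with $v_1,v_2\le v$; the composition $u_i\ll v_i\le v$ gives $u_i\ll v$, so $u_1,u_2\in approx_B(v)$, and directedness of $approx_B(v)$ yields $w\in approx_B(v)$ above both. Then $w\in B$ and $w\ll v\ll y$, so $w\in U$ is a common upper bound of $u_1,u_2$. For $\bigsqcup U=y$, clearly $y$ is an upper bound of $U$; conversely, for any upper bound $b$ of $U$ and any $v\in approx_B(y)$ one has $approx_B(v)\subseteq U$ (because $u\in approx_B(v)$ means $u\ll v\ll y$), whence $v=\bigsqcup approx_B(v)\le b$ by continuity, and then $y=\bigsqcup approx_B(y)\le b$ by continuity again.

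The main obstacle I anticipate is the bookkeeping in the directedness step: in a general continuous dcpo the joins $v_1\sqcup v_2$ need not exist, so I cannot simply take a supremum to merge the two branches; instead I must invoke basis directedness at $y$ to produce a common $v\ll y$, then pass through the composition law for $\ll$ to populate $approx_B(v)$, before invoking directedness a second time inside $approx_B(v)$. Keeping every intermediate element inside $B$ throughout this nested construction is precisely what secures the ``$x$ may be chosen from $B$'' clause for free, so the two assertions of the proposition are really one assertion in disguise.
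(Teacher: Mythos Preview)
Your argument is correct and is essentially the standard proof of the interpolation property as found in the cited references \cite{Abramsky_94,Gierz}. Note, however, that the paper itself does not supply a proof of this proposition: it is quoted as a background fact from domain theory, with the citation serving in lieu of a proof. There is therefore nothing in the paper to compare your argument against beyond the references themselves, and your proof matches the textbook treatment (define the ``double way-below'' set $U$, verify it is directed with supremum $y$, then extract the interpolant via finiteness of $M$).

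One minor expository remark: in your directedness step you tacitly use the standard monotonicity properties of $\ll$ in both arguments (namely $a\ll b\le c\Rightarrow a\ll c$ and $a\le b\ll c\Rightarrow a\ll c$); these are immediate from the definition, but since the paper only records the definition of $\ll$ and not these consequences, it would do no harm to make them explicit when you use them.
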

\subsection{Effectively enumerable $T_0$-spaces }In the  paper we work  with  the   \ee $T_0$-spaces. The class of \ee topological spaces has been   proposed in \cite{Korovina_CCA08}. This is a wide class  containing weakly effective $\omega$--continuous domains, computable metric spaces and positive predicate structures \cite{Korovina_pps} that retains  certain natural effectivity requirements which allow us  to represent important concepts of effective topology.

Let $\left( X,\tau,\alpha \right )$ be a topological space, where $X$
is a non-empty set, $B \subseteq 2^{X}$ is
 a base of the topology $\tau$  and $\alpha:\omega\to B$ is a numbering.

\begin{defi}\cite{Korovina_CCA08}\label{ee}
A topological space $\left( X,\tau,\alpha \right )$ is {\em effectively
enumerable} if the following conditions hold.
\begin{enumerate}
\item[(1)] There exists a computable function $g:\omega\times\omega\times\omega\to \omega$ such that
\begin{eqnarray*}
\alpha( i)\cap\alpha( j)=\bigcup_{n\in\omega}\alpha( g(i,j,n)).
\end{eqnarray*}
\item[(2)] The set $\{ i\mid \alpha( i)\neq \emptyset\} $ is computably enumerable.
\end{enumerate}
\end{defi}

In the following we assume that an effectively enumerable topological space $\left( X,\tau,\alpha \right )$ is given.
Further on we will often abbreviate $\left( X,\tau,\alpha \right )$ by $X$
if $\tau$ and  $\alpha$ are clear from a context. We use the following notions of
an effectively open set and a computable sequence of effectively open sets.
\begin{defi}\cite{Korovina_CCA08}\label{def_eo}
\begin{enumerate}
 \item A set
 ${\mathcal O}\subseteq X$ is {\em effectively open} if
there exists a computably enumerable set $V$  such that
\begin{eqnarray*}
{\mathcal O}=\bigcup_{n\in V}\alpha(n).
\end{eqnarray*}
\item A sequence $\{{\mathcal O}_n\}_{n\in\omega}$ of effectively open sets is called {\em computable} if there exists a computable  sequence  $\{V_n\}_{n\in\omega}$ of computably enumerable sets such that
   ${\mathcal O}_n=\bigcup_{k\in V_n}\alpha(k)$.
\end{enumerate}
\end{defi}

\noindent Let ${\mathcal O}_X$ denote the set of all open subsets of $X$ and $\O_X$ denote the set of all effectively open subsets of $X$.
\begin{defi}\cite{Korovina_PSI14}
\hfill  
\begin{enumerate}
\item A numbering $\beta:\omega\to\O_X$ is called {\em computable} if   $\{\beta(n)\}_{n\in\omega}$ is   a computable sequence.
\item A numbering $\beta:\omega\to\O_X$ is called  {\em  principal computable} if
it is computable and every computable numbering  $\xi$ is computably reducible to $\beta$, i.e.,
there exists a computable function $f:\omega\to\omega$ such that $\xi(i)=\beta(f(i))$.
\end{enumerate}
\end{defi}
\begin{prop}\cite{Korovina_PSI14}\label{pcnumb_eos}
There exists a principal computable numbering $\alpha^{e}_X$ of
 $\O_X$.
\end{prop}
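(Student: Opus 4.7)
The plan is to define the numbering directly via the Kleene enumeration of c.e.\ sets and then verify the three required properties (surjectivity onto $\O_X$, computability, and principality) essentially from the definitions.

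First, I would set
\begin{eqnarray*}
\alpha^{e}_X(e) \;=\; \bigcup_{k \in W_e} \alpha(k).
\end{eqnarray*}
Surjectivity onto $\O_X$ is immediate from Definition~\ref{def_eo}(1): every effectively open set has the form $\bigcup_{n \in V}\alpha(n)$ for some c.e.\ $V$, and by the Kleene numbering $V = W_e$ for some index $e$, so $\alpha^{e}_X(e)$ enumerates all of $\O_X$. Thus $\alpha^{e}_X$ is a numbering of $\O_X$.

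Next I would check that $\alpha^{e}_X$ is a computable numbering in the sense of Definition~\ref{def_eo}(2). Here I just need to exhibit a computable sequence $\{V_e\}_{e\in\omega}$ of c.e.\ sets with $\alpha^{e}_X(e) = \bigcup_{k \in V_e}\alpha(k)$, and $V_e := W_e$ works, since the sequence $\{W_e\}_{e\in\omega}$ is computable by definition of the Kleene numbering (the set $\{(k,e)\mid k\in W_e\}$ is c.e.).

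For principality, suppose $\xi:\omega\to\O_X$ is any computable numbering. By Definition~\ref{def_eo}(2) there is a computable sequence $\{V_i\}_{i\in\omega}$ of c.e.\ sets with $\xi(i) = \bigcup_{k\in V_i}\alpha(k)$. Using the equivalence noted in the Preliminaries (a sequence of c.e.\ sets is computable iff $V_i = W_{f(i)}$ for some computable $f$), we pick such an $f$; then
\begin{eqnarray*}
\xi(i) \;=\; \bigcup_{k\in W_{f(i)}}\alpha(k) \;=\; \alpha^{e}_X(f(i)),
\end{eqnarray*}
so $\xi$ is computably reducible to $\alpha^{e}_X$ via $f$. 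This completes all three verifications.

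There is no real obstacle here: the proof is essentially a bookkeeping argument that transfers the universality of the Kleene numbering of c.e.\ sets through the definition of effective openness. The only mildly delicate point is making explicit that ``computable sequence of c.e.\ sets'' is the same as ``uniformly c.e.'' (i.e.\ of the form $W_{f(i)}$ for a computable $f$), which is already recorded in the Preliminaries.
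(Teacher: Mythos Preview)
Your argument is correct and is precisely the natural construction: push the universality of the Kleene numbering $\{W_e\}_{e\in\omega}$ through the definition of effectively open sets. Each of the three verifications (surjectivity, computability of the sequence, principality) is handled correctly, and your use of the equivalence ``computable sequence of c.e.\ sets $\Leftrightarrow$ $V_i=W_{f(i)}$ for computable $f$'' is exactly what is needed for the reduction step.

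Note, however, that the paper does not supply its own proof of this proposition at all: it is stated with a citation to \cite{Korovina_PSI14} and no proof environment follows. So there is nothing in the present paper to compare your argument against. Your write-up is the standard proof one would expect for this statement, and it would be an appropriate proof to include were one required.
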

\subsection{Computable Elements}\label{sec_comp_element}
In this section we work with  \ee $T_0$-spaces  $\left( X,\tau,\alpha \right )$
and use the following notion of a computable element.
\begin{defi}\cite{Korovina_CiE15}\label{comp_element}
 \begin{enumerate}
\item  An element $x\in X$ is called {\em computable}  if the set $A_x=\{n|x\in\alpha(n)\}$ is computably enumerable.
    \item A sequence $\{a_n\}_{n\in\omega}$ of computable elements is called {\em computable} if the sequence  $\{A_{a_n}\}_{n\in\omega}$ of computably enumerable sets is computable.
    \end{enumerate}
\end{defi}

\noindent It is easy to see that the definition above generalises  the notions of a~computable real number, a~computable element of a computable metric space \cite{Brattka_baire},
 a~computable element of a~weakly effective $\omega$-continuous domain \cite{Weihrauch_cpo,Spreen84}
  and agrees with the notion of a~computable element of a computable topological space \cite{Grubba09}.
 It is worth noting that there are \ee topological spaces without computable elements \cite{Korovina_mscs}.
 Further on we use the following notations.
 \begin{itemize}
 \item $\XC$ denotes the set of all computable elements of  $X$.
 \item $A_x=\{n|x\in\alpha(n)\}$.
 \item For $K\subseteq \XC$, $\SK=\{A_a\mid a\in K\}$, in particular $\SX=\{A_a\mid a\in \XC\}$.

 \end{itemize}

\begin{defi}\label{numb_c_elem}\cite{Korovina_CiE15}
\hfill  
\begin{enumerate}
\item A numbering $\gamma:\omega\to\XC$ is called {\em computable} if   $\{A_{\gamma(n)}\}_{n\in\omega}$ is a  computable sequence.
\item A numbering $\gamma:\omega\to\XC$   is called {\em  principal computable} if
it is computable and every computable numbering  $\xi$ is computably reducible to $\alpha$, i.e.,
there exists a computable function $f:\omega\to\omega$ such that $\xi(i)=\gamma(f(i))$.
\end{enumerate}
\end{defi}

\noindent Now we address the natural question  whether for an \ee $T_0$-space there exists  a computable numbering of the computable elements.
 First, we  observe that while for the computable real numbers  there is no  computable numbering \cite{Ceitin,Lof}
  as well as for the computable points of a complete  computable metric space \cite{Brattka_baire},
  for a weakly effective $\omega$--continuous domain  there is a computable numbering of the computable elements  \cite{Weihrauch_cpo}.
Below we point out a natural sufficient condition on the family of basic neighborhoods of computable elements that guarantees the existence of a principal
computable numbering. We show that weakly effective $\omega$--continuous domains satisfy this condition.

\begin{defi}\cite{Ershov_Num_2}\label{wn_family}
Let $S$ be  a family of computably enumerable subsets of $\omega$. $S$ is called a $wn$-family if
there exists a partial computable function $\sigma:\omega\to\omega$ such that \hfill
\begin{enumerate}
\item[(i)] if $\sigma(n)\downarrow$ then $W_{\sigma(n)}\in S$ and
\item[(ii)] if $W_n\in S$ then $n\in {\rm dom}(\sigma)$ and $W_n=W_{\sigma(n)}$.
\end{enumerate}
\end{defi}
From \cite{Ershov_Num_2} it follows that any $wn$-family $S_X$ has a standard principal computable numbering $\gamma:n\mapsto W_{\sigma(h_0(n))}$, where $h_0:\omega\to\omega$ is a total computable function such that
${\rm im} (h_0)={\rm dom} (\sigma)$.

\begin{thm}\label{numb_comp_elem}\cite{Korovina_mscs}
 If
$\SX$ is a $wn$-family then there exists  a principal computable (canonical) numbering $\bar{\gamma}:\omega\to X_c$.
\end{thm}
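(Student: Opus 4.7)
The plan is to transfer the standard principal computable numbering of the $wn$-family $\SX$ (guaranteed by the discussion following Definition~\ref{wn_family}) up to a numbering of $X_c$ itself, using the $T_0$ assumption to make the transfer unambiguous.

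First, I would observe that the map $\Phi : X_c \to \SX$ given by $a \mapsto A_a$ is a bijection. Surjectivity is by definition of $\SX$. Injectivity uses that $X$ is $T_0$: if $a \neq b$ in $X_c$, then some open set, and hence (since $\alpha$ is a base) some $\alpha(n)$, contains one but not the other, so $A_a \neq A_b$. Next I would invoke the hypothesis: because $\SX$ is a $wn$-family, we have a partial computable $\sigma$ satisfying (i) and (ii) of Definition~\ref{wn_family}, and (as stated in the excerpt) a total computable $h_0$ with $\mathrm{im}(h_0) = \mathrm{dom}(\sigma)$, giving the standard principal computable numbering $\gamma : n \mapsto W_{\sigma(h_0(n))}$ of $\SX$.

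Define $\bar\gamma : \omega \to X_c$ by $\bar\gamma(n) = \Phi^{-1}(\gamma(n))$, i.e., the unique $a \in X_c$ whose $A_a$ equals $\gamma(n)$. To see that $\bar\gamma$ is a computable numbering in the sense of Definition~\ref{numb_c_elem}(1), I would note that $A_{\bar\gamma(n)} = \gamma(n) = W_{\sigma(h_0(n))}$, and since $\sigma \circ h_0$ is total computable, the sequence $\{A_{\bar\gamma(n)}\}_{n \in \omega}$ is computable. Surjectivity of $\bar\gamma$ follows from surjectivity of $\gamma$ onto $\SX$ combined with the bijection $\Phi$.

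For principality, let $\xi : \omega \to X_c$ be any computable numbering. Then $\{A_{\xi(n)}\}_{n \in \omega}$ is a computable sequence of c.e.\ sets, so there is a total computable $g$ with $A_{\xi(n)} = W_{g(n)}$. Since $W_{g(n)} \in \SX$, condition (ii) of Definition~\ref{wn_family} yields $g(n) \in \mathrm{dom}(\sigma) = \mathrm{im}(h_0)$. Define $f(n) = \mu m.\, h_0(m) = g(n)$; this is a total computable function, and by construction $\sigma(h_0(f(n))) = \sigma(g(n))$, so
\[
A_{\bar\gamma(f(n))} = W_{\sigma(h_0(f(n)))} = W_{\sigma(g(n))} = W_{g(n)} = A_{\xi(n)}.
\]
Applying injectivity of $\Phi$, we conclude $\bar\gamma(f(n)) = \xi(n)$, which gives the required reduction.

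The only delicate point is the final step in the reduction argument: we do not get $f$ directly from $\sigma$, because the presented numbering $\gamma$ runs through $\mathrm{dom}(\sigma)$ only via $h_0$. The key observation is that totality of $h_0$ together with $\mathrm{im}(h_0) = \mathrm{dom}(\sigma)$ permits the bounded search $\mu m.\, h_0(m) = g(n)$; everything else is bookkeeping enabled by the $T_0$ assumption, which converts an equality of c.e.\ indices into an equality of elements of $X_c$.
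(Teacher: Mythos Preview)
Your proof is correct and takes exactly the same approach as the paper: the paper's entire proof is the single line ``define $\bar{\gamma}(n)=a\leftrightarrow A_a=\gamma(n)$,'' which is precisely your $\bar\gamma=\Phi^{-1}\circ\gamma$. You have simply supplied the verifications (well-definedness via $T_0$, computability, surjectivity, and principality) that the paper leaves to the reader.
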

\begin{proof}

Let us define $\bar{\gamma}(n)=a\leftrightarrow A_a=\gamma(n)$.
\end{proof}
\begin{defi}
Let   $\bar{\gamma}:\omega\to \XC$ be a principal computable numbering and  $L\subseteq\XC$.
The set  $\Ix(L)=\{n|\bar{\gamma}( n)\in L\}$ is called  an {\em index set} for the subset $L$.
\end{defi}
\begin{prop}\cite{Korovina_mscs}\label{eo_ce}
Let  $\left( X,\tau,\alpha \right )$ be an \ee $T_0$-space and $\SX$ be a $wn$-family.
If  $K$ is effectively open in $X_c$ then $Ix(K)$ is computably enumerable.
\end{prop}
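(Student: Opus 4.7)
The plan is to unfold the definition of effective openness in $X_c$ and combine it with the computability of the principal numbering $\bar\gamma:\omega\to X_c$ supplied by Theorem~\ref{numb_comp_elem}. Since $\bar\gamma$ is a computable numbering of $X_c$, the sequence $\{A_{\bar\gamma(n)}\}_{n\in\omega}$ is a computable sequence of c.e.\ sets, so there exists a total computable function $f:\omega\to\omega$ with $A_{\bar\gamma(n)}=W_{f(n)}$ for every $n$. This gives a uniform handle on ``$\bar\gamma(n)$ lies in $\alpha(m)$'' as a $\Sigma^0_1$ predicate of the pair $(n,m)$, which is the only ingredient the argument really needs.

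Since $K$ is effectively open in $X_c$, there is a c.e.\ set $V\subseteq\omega$ with $K = X_c\cap\bigcup_{m\in V}\alpha(m)$. The crux is then the chain of equivalences
\[
\bar\gamma(n)\in K \ \Longleftrightarrow\ (\exists m)(m\in V \wedge \bar\gamma(n)\in\alpha(m)) \ \Longleftrightarrow\ V\cap A_{\bar\gamma(n)}\ne\emptyset \ \Longleftrightarrow\ V\cap W_{f(n)}\ne\emptyset,
\]
where the first equivalence uses that $\bar\gamma(n)\in X_c$ automatically. The final condition is $\Sigma^0_1$ in $n$ uniformly: enumerate $V$ and $W_{f(n)}$ in parallel and halt if and only if they ever produce a common element. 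This exhibits $\Ix(K)=\{n\mid \bar\gamma(n)\in K\}$ as a computably enumerable set.

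I do not anticipate a real obstacle; the proof is essentially an index calculation. The only point that deserves explicit verification is uniformity, namely that $V$ is c.e.\ and that $\{W_{f(n)}\}_{n\in\omega}$ is uniformly c.e.\ in $n$. Both are guaranteed by the hypotheses: the first by the definition of effective openness, and the second by the computability of $\bar\gamma$, which in turn is provided by Theorem~\ref{numb_comp_elem} applied to the $wn$-family $\SX$.
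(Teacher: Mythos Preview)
Your argument is correct. Note, however, that the paper does not actually supply a proof of this proposition: it is stated with a citation to \cite{Korovina_mscs} and no proof follows, so there is no in-paper argument to compare against. What you wrote is the natural direct verification---pull back the effectively open set along the principal computable numbering $\bar\gamma$ from Theorem~\ref{numb_comp_elem} and observe that $\bar\gamma(n)\in K$ is equivalent to the uniformly $\Sigma^0_1$ condition $V\cap W_{f(n)}\neq\emptyset$---and is exactly the argument one would expect in the cited reference.
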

 In \cite{Korovina_CiE15} we already have shown that for an \ee topological space $X$ such that $\SX$ is a $wn$-family generalisations of Rice's theorem hold. Moreover from the results in \cite{Ershov_Num_2} it is easy to see the following. If $X_c$ has the least element then the principal computable numbering is complete. So several   results from classical numbering theory can  be generalised for this case, in particular, if $Ix(K)$ is computably enumerable then it is creative.
 So
at  first glance this class looks promising to generalise the Rice-Shapiro theorem. However  in Section~\ref{sec_Counterexamples} we construct a counterexample that shows the existence of
an \ee topological space $X$ such that $\SX$ is a $wn$-family but the Rice-Shapiro theorem does not hold.
This forces us to search  for stronger requirements on \ee topological spaces which, on the one hand, do not  restrict the class too much, on the other hand, guarantee that  the Rice-Shapiro theorem holds.

\section{ The Rice-Shapiro Theorem}\label{sec_Rice_Shapiro}
In this section we  recall the classical Rice-Shapiro theorem then introduce  the new class of modular $T_0$-spaces and   prove the generalised  Rice-Shapiro Theorem for this class.

\begin{thm}[Classical Rice-Shapiro]\cite{Rogers}\label{classical_RSH}
Let $K$ be  a class of c.e. sets.
Then $Ix(K)$ is computably enumerable if and only if there exists a strongly computable sequence $\{F_n\}_{n\in\omega}$ of finite subsets of $\omega$ such that $K=\{E \mbox{ is  a c.e. set}\mid (\exists{n\in\omega})\, E\supseteq F_n\}$.
In  modern topological terms, $Ix(K)$ is computably enumerable if and only if $K$ is effectively open in the space of c.e. subsets of $\omega$, where $\mathcal{P}(\omega)$ is endowed with  the Scott topology.
\end{thm}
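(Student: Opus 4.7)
The two directions split naturally. For ($\Leftarrow$), assume a strongly computable $\{F_n\}$ with the stated property; then $e \in \Ix(K) \Leftrightarrow (\exists n)(\exists s)\, F_n \subseteq W_e^s$, which is a c.e.\ predicate in $e$ because $F_n$ is uniformly strongly computable from $n$ and the relation ``$F_n \subseteq W_e^s$'' is decidable in $n, e, s$. So $\Ix(K)$ is c.e.

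For the forward direction I would first establish two structural properties of $K$, then extract the sequence. The first property (\emph{finite approximation}) states that every $E \in K$ contains a finite subset which is already in $K$. Argue by contradiction. Fix $E = W_a \in K$ with a computable enumeration $\{u_s\}_{s \in \omega}$, and suppose no finite subset of $E$ lies in $K$. For each index $e$, define via the $s$-$m$-$n$ theorem an index $g(e)$ such that $W_{g(e)}$ is built as follows: at stage $s$, add $u_s$ provided $e$ has not yet appeared by stage $s$ in a fixed c.e.\ enumeration of $\Ix(K)$; otherwise add nothing. Take a fixed point $c$ of $g$ via the recursion theorem. If $c \notin \Ix(K)$ then nothing ever blocks the enumeration, so $W_c = E \in K$, forcing $c \in \Ix(K)$, a contradiction; if $c \in \Ix(K)$ first at some stage $s_0$, then $W_c$ is the finite set $\{u_s \mid s < s_0\} \subseteq E$, which by the working assumption is not in $K$, forcing $c \notin \Ix(K)$, again a contradiction. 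The second property (\emph{upward closure}) says that if $F \in K$ is finite and $F \subseteq E'$ with $E'$ c.e., then $E' \in K$. A symmetric construction defines $W_{g(e)}$ so as to always enumerate $F$, and to additionally enumerate the elements of $E'$ starting from the stage when $e$ first enters the enumeration of $\Ix(K)$; a fixed point $c$ then yields $W_c = F \in K$ if $c \notin \Ix(K)$ and $W_c = F \cup E' = E' \notin K$ if $c \in \Ix(K)$, contradictions in both cases.

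With these in hand, the sequence is extracted as follows. From $n$ compute uniformly an index $i(n)$ with $W_{i(n)} = D_n$. Then $S = i^{-1}(\Ix(K)) = \{n \mid D_n \in K\}$ is c.e., and a standard padding trick converts its enumeration into a total computable function $h$ with range $S$. Setting $F_n = D_{h(n)}$ gives the required strongly computable sequence: the inclusion $K \subseteq \{E \text{ c.e.} \mid \exists n\, E \supseteq F_n\}$ uses finite approximation (find a finite $F = D_m \subseteq E$ with $F \in K$, then $m \in S$), and the reverse uses upward closure applied to the finite set $F_n \in K$. The degenerate case $K = \emptyset$, where $S = \emptyset$, is absorbed into this by convention.

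\emph{Main obstacle.} The substantive work is the coordinated application of the recursion theorem in the two structural steps: the specification of $W_c$ must interlock exactly with whether $c$ has yet been enumerated into $\Ix(K)$ so that both possible outcomes lead to an explicit contradiction, in one case freezing $W_c$ as a finite approximation of $E$ and in the other case completing $W_c$ to the ``bad'' set $E'$. The remaining bookkeeping --- the easy direction and the padding step --- is routine.
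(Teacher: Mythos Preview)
The paper does not supply its own proof of this statement: Theorem~\ref{classical_RSH} is stated as the classical result, with a citation to Rogers, and no argument is given. So there is nothing in the paper to compare your proposal against directly.

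That said, your argument is a correct and standard proof of the classical Rice--Shapiro theorem. The two recursion-theorem constructions (for finite approximation and for upward closure) are set up correctly, and the extraction of the strongly computable sequence from the c.e.\ set $\{n\mid D_n\in K\}$ is the usual step. One small expository point: in the upward-closure step you should state explicitly that you are assuming $E'\notin K$ for contradiction before running the case split on the fixed point~$c$; as written, the clause ``$W_c=E'\notin K$'' appears without that hypothesis having been announced.

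If you want a point of contact with the paper: the paper's own Theorem~\ref{RSH} (Generalised Rice--Shapiro) specialises to the classical statement, since $\mathcal{P}(\omega)$ with the Scott topology is a weakly effective $\omega$-continuous domain and hence, by Proposition~\ref{wn_family_wef_cont_dom}, a modular $T_0$-space. The paper's proof of Theorem~\ref{RSH} does not invoke the recursion theorem directly but instead goes through the Branching Lemma (Lemma~\ref{berger}), which packages the fixed-point trick once and for all; monotonicity (your upward closure) and openness (your finite approximation) are then both obtained as applications of that lemma. Your approach is more elementary and self-contained; the paper's approach isolates a reusable tool that works uniformly in the abstract modular setting.
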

We use the specialisation order $x\leq y\rightleftharpoons \mbox{ for all open $\mathcal{O}$  if $x\in \mathcal{O}$ then $ y\in \mathcal{O} $}$. For $B\subseteq X$ we use the notation $x\leq B$ if, for all $y\in B$, $x\leq y$.
\begin{defi}\label{d_modular_space}
An \ee $T_0$-space  $\left( X,\tau,\alpha \right )$ is called  a modular $T_0$--space if it  satisfies the following requirements:
\begin{enumerate}[label=Req\ \arabic*:]
\item[Req 1:] $\SX$ is a $wn$-family.
\item[Req 2:] There exist a computable sequence  $\{b_n\}_{n\in\omega}$ of computable elements and  a computable sequence $\{\mathcal{O}_n\}_{n\in\omega}$ of effectively open sets such that
    \begin{enumerate}
    \item $b_n\leq \mathcal{O}_n$, where $\leq$  is the specialisation order and
    \item for all $m\in\omega $ $\alpha(m)=\bigcup_{b_i\in\alpha(m)}\mathcal{O}_i$.
    \end{enumerate}
\end{enumerate}
\end{defi}

\begin{lem}\label{l_modular_prop}
Assume  that $X$ is a modular $T_0$-space. Then for any finite $V\subset \omega$, the following equality holds
\begin{align*}
&\bigcap_{i\in V}\alpha(i)=
\!\!\!\!\!\!\bigcup_{ b_j\in \!\!\bigcap\limits_{i\in V}\!\!\alpha(i)}\!\!\!\!\!\!\mathcal{O}_j.
\end{align*}
\end{lem}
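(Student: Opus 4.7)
My plan is to prove the two inclusions separately, with the nontrivial direction reducing to the fact that a finite intersection of basic opens is again a union of basic opens.

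For the inclusion $\supseteq$, I would start from $b_j\in\bigcap_{i\in V}\alpha(i)$ and use Req 2(a), i.e. $b_j\le \mathcal{O}_j$ in the specialisation order. By definition of $\le$, for every $y\in\mathcal{O}_j$ and every open set $U$, $b_j\in U$ implies $y\in U$. Applying this with $U=\alpha(i)$ for each $i\in V$ gives $\mathcal{O}_j\subseteq\alpha(i)$ for all such $i$, hence $\mathcal{O}_j\subseteq\bigcap_{i\in V}\alpha(i)$, as required.

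For the inclusion $\subseteq$, fix $x\in\bigcap_{i\in V}\alpha(i)$. The key observation is that by clause (1) of Definition~\ref{ee} applied inductively on $|V|$, the finite intersection $\bigcap_{i\in V}\alpha(i)$ is itself a union of basic open sets from the range of $\alpha$. Hence there exists an index $m$ such that $x\in\alpha(m)\subseteq\bigcap_{i\in V}\alpha(i)$. Now I invoke Req 2(b): $\alpha(m)=\bigcup_{b_j\in\alpha(m)}\mathcal{O}_j$, so pick $j$ with $b_j\in\alpha(m)$ and $x\in\mathcal{O}_j$. Since $\alpha(m)\subseteq\bigcap_{i\in V}\alpha(i)$, this $b_j$ also lies in $\bigcap_{i\in V}\alpha(i)$, witnessing that $x$ belongs to the right-hand side.

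There is really no obstacle here; the statement is essentially a combination of the two clauses of Req 2 with the basis property of $\alpha$. The only thing to flag is the tacit use of induction on $|V|$ to reduce arbitrary finite intersections of basic opens to unions of basic opens, which is immediate from Definition~\ref{ee}(1) and the associativity of intersection. I would write this as a single short paragraph after splitting the two inclusions, without invoking any heavier machinery such as the principal computable numbering or the interpolation property.
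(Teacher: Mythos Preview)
Your proof is correct and follows essentially the same route as the paper's: both directions are argued exactly as you describe, with the $\supseteq$-direction via the specialisation order (Req~2(a)) and the $\subseteq$-direction by first expressing the finite intersection as a union of basic opens (Definition~\ref{ee}(1)) and then applying Req~2(b) to a single $\alpha(m)$ containing $x$. The only cosmetic difference is that the paper singles out the case $V=\emptyset$ explicitly, whereas your inductive reduction handles it implicitly via the fact that $\alpha$ enumerates a base.
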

\begin{proof}
If $V=\emptyset$ then the both sides of the equality are $X$.
Now we assume that $V\neq\emptyset$.

\noindent $\subseteq$). Let $x\in \bigcap_{i\in V}\alpha(i)$. By the definition of an \ee topological space, there exists a  c.e. set $E\subseteq \omega$ such that $\bigcap_{i\in V}\alpha(i)= \bigcup_{i\in E}\alpha(i)$. Suppose $x\in \alpha(k)$, $k\in E$. By Req~2, there exists $l\in\omega$ such that $b_l\in\alpha(k)$, $x\in \mathcal{O}_l$ and $\mathcal{O}_l\subseteq \alpha(k)$. Therefore
$b_l\in \bigcap_{i\in V}\alpha(i)$ and $x\in \mathcal{O}_l$.

\noindent $\supseteq$).
Assume $x\in \mathcal{O}_j$ and $b_j\in\bigcap_{i\in V}\alpha(i)$ for some fixed $j\in\omega$.
By Req~2, $b_j\leq x$. From the definition of the specialisation order it follows that $x\in \bigcap_{i\in V}\alpha(i)$.
\end{proof}
Below we show  that every weakly effective $\omega$-continuous domain is a modular $T_0$-space. In Section~\ref{sec_Counterexamples} we will see that they are a proper subclass of the modular $T_0$-spaces.
\begin{prop}\label{wn_family_wef_cont_dom}
Let $(D;B \leq,\bot) $ be a weakly effective $\omega$--continuous domain.
Then  $D$ endowed with the Scott topology is a modular $T_0$-space.
 \end{prop}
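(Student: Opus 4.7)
The plan is to verify the three clauses of Definition 3.2 for $D$ endowed with the Scott topology, taking as basis numbering $\alpha(n) := \uparrow \beta(n) = \{x \in D \mid \beta(n) \ll x\}$.

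First, I would confirm that $(D, \tau, \alpha)$ is an effectively enumerable $T_0$-space. The Scott topology is always $T_0$, and invoking the Interpolation Property together with the fact that $\ll$ is c.e.\ on the basis yields
\[\alpha(i) \cap \alpha(j) = \bigcup\{\alpha(k) \mid \beta(i) \ll \beta(k) \text{ and } \beta(j) \ll \beta(k)\},\]
from which the computable $g$ required in Definition 2.5 is immediate; likewise, $\alpha(i) \neq \emptyset$ iff some $k$ satisfies $\beta(i) \ll \beta(k)$, a c.e.\ condition by weak effectivity.

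Next, for Req 2, I would take $b_n := \beta(n)$ and $\mathcal{O}_n := \alpha(n)$. Each $\beta(n)$ lies in $D_c$ because $A_{\beta(n)} = \{i \mid \beta(i) \ll \beta(n)\}$ is uniformly c.e.\ in $n$, and $\{\mathcal{O}_n\}$ is trivially a computable sequence of effectively open sets. Since the Scott specialisation order coincides with $\leq$ and $\ll$ refines $\leq$, one has $b_n \leq \mathcal{O}_n$. The identity $\alpha(m) = \bigcup_{b_i \in \alpha(m)} \mathcal{O}_i$ is just interpolation rephrased as $\uparrow \beta(m) = \bigcup\{\uparrow \beta(i) \mid \beta(m) \ll \beta(i)\}$.

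The main obstacle is Req 1: showing that $\SX$ is a $wn$-family. I would produce, via the s-m-n theorem, a partial computable $\sigma$ whose output $\sigma(n)$ indexes the c.e.\ set built as follows. After waiting for some element to appear in $W_n$ (so that $\sigma(n)\uparrow$ when $W_n = \emptyset$), enumerate $W_n$ as $n_0, n_1, \ldots$ and inductively build a $\ll$-ascending chain $\beta(j_0) \ll \beta(j_1) \ll \cdots$ by setting $j_0 := n_0$ and, at stage $s \geq 1$, dovetail-searching for some $k$ with $n_k \in W_n$ and $\beta(j_{s-1}), \beta(n_s) \ll \beta(n_k)$, then setting $j_s := n_k$. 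In parallel, for each $s$ at which $j_s$ becomes defined, enumerate $A_{\beta(j_s)}$ into $W_{\sigma(n)}$. If the chain stalls at some $s^*$, then $W_{\sigma(n)} = A_{\beta(j_{s^*})}$ lies in $\SX$; if the chain is infinite, then $W_{\sigma(n)} = A_x$ for $x := \bigsqcup_s \beta(j_s) \in D_c$, again in $\SX$ (the inclusion $A_x \subseteq \bigcup_s A_{\beta(j_s)}$ requires another application of interpolation to replace $\leq$ by $\ll$). When $W_n = A_y$ for some $y \in D_c$, the $\ll$-directedness of $A_y$ guarantees that the chain never stalls and that $\bigsqcup_s \beta(j_s) = y$, yielding $W_{\sigma(n)} = A_y = W_n$. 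The delicate point is ensuring that the parallel enumeration always produces a set in $\SX$ regardless of the structure of $W_n$, which the case split above handles.
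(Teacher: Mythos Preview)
Your proposal is correct and follows essentially the same approach as the paper: Req~2 is handled identically via $b_n=\beta(n)$, $\mathcal{O}_n=\alpha(n)$ and interpolation, and for Req~1 both you and the paper build a $\ll$-ascending chain of basis elements drawn from $W_n$ (using interpolation to extend the chain) and take its supremum to obtain a computable element whose neighbourhood filter is $W_{\sigma(n)}$. The only technical difference is that the paper makes $\sigma$ total by starting the chain at $\bot$ and using a ``minimal uncovered index'' strategy at each stage, whereas you allow $\sigma$ to be partial and process the enumeration of $W_n$ sequentially; both variants are valid realisations of the same idea.
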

 \begin{proof}
 Without loss of generality assume $\beta:\omega\to B$ such that $\beta(0)=\bot$.
 It is worth noting that the Scott topology on $D$ is formed by the open sets
 $\mathcal{U}_n=\{x\mid\beta(n)\ll x\}$. We define the numbering of the~base of the~Scott topology as follows.
 \begin{align*}
 & n=0\rightarrow \alpha(n)=\emptyset;\\
 & n>0\rightarrow \alpha(n)=\mathcal{U}_{n}.
 \end{align*}

 In \cite{Korovina_CCA08} we have already proven that $(D,\alpha)$ is an \ee space. Now we show that this space satisfies all requirements  of Definition~\ref{d_modular_space}. Although in \cite{Korovina_mscs} we established that $\widehat{D_c}$ is a $wn$-family, the proof was not enough seeable and complete. Below we propose a complete proof of this property.

\noindent  Req 1:
 From the definition of $\alpha$ it follows that $\widehat{D_c}=\{\{n\mid x\gg \beta(n)\}\mid x\in D_c\}$.
 First let us note that
 there exists a strongly computable sequence $\{A_s\}_{s\in\omega}$ of finite subsets of $\omega^2$ such that
 \begin{enumerate}
 \item $\{A_s\}_{n\in\omega}$ is a presentation of  the c.e. set $\{(i,j)\mid \beta(i)\ll\beta(j)\} $;
 \item $A_s\subseteq \{0,\dots,s\}^2$;
 \item $A_s$ is a transitive relation on $\omega^2$.
  \end{enumerate}
  For that it is sufficient to take the transitive closure of any presentation of $\{(i,j)\mid \beta(i)\ll\beta(j)\} $.
Below  for $A_t(i,j)$ we use the informal  notation $\beta(i)\ll^t\beta(j)$ in order to emphasise that  $A_t(i,j)$ finitely approximates $(\{(i,j)\mid \beta(i)\ll\beta(j)\} $.
We will write    $\beta(i)\not\ll^t\beta(j)$ if $(i,j)\not\in A_t$.

 For a c.e. set $W_e$,  we simultaneously by stages construct the computable functions $h_e:\omega\to\omega\cup \{-1\}$ and $g_e:\omega\to\omega$ as follows.

 \noindent $\bf{Stage\, 0}$. $g_e(0)=0$ and $h_e(0)=0$.

 \noindent $\bf{Stage\, s+1}$.
Assume that $g_e(s)$ and $h_e(s)$ are already constructed.
Put
 \begin{align*}
h_e(s+1)=\left \{
\begin{array}{@{~}l@{~}l}
k & \mbox{ if }  k=\min\{ n \mid n\in  W^{s+1}_e \mbox{ and } \beta(n)\not\ll^{s+1}\beta(g_e(s))\}\\
-1 &\mbox{ if }  \mbox{there is no such $k$}
\end{array}
\right.
\end{align*}
and
\begin{align*}
g_e(s+1)=\left \{
\begin{array}{@{~}l@{~}l}
b&\mbox{if } b=\min\{ x >0\mid x\in  W^{s+1}_e, \,
\beta(x)\gg^{s+1}\beta(g_e(s)),\, \beta(x)\gg^{s+1}\beta(k)\}\\
&\,\,\,\, \mbox{ and } h_e(s+1)=k\geq 0\\
g_e(s)&\mbox{if there is no such $b$}.
\end{array}
\right.
\end{align*}

\noindent By construction, either $\beta(g_e(s+1))\gg^{s+1} \beta(g_e(s))$ or $g_e(s+1)= g_e(s)$. It is obvious that $g_e$ is computable uniformly in $e$.
Put
\begin{align*}
\alpha_c(e)=\sup_{s\in\omega} g_e(s).
\end{align*}
We define a sequence $\{W_{\sigma(i)}\}_{i\in\omega}$ of c.e. sets by the rule $W_{\sigma(e)}=\{n\mid \beta(n)\ll\alpha_c(e)\}$.

Since, by definition, $\{n\mid \beta(n)\ll\alpha_c(e)\}=\{n\mid (\exists s) \beta(n)\ll\beta(g_e(s))\}$ the function $\sigma$ can be chosen to be  computable.
In order to show that $\sigma$ is a required function it is sufficient to prove  the following properties.
\begin{enumerate}[label=Pr \arabic*:]
\item[Pr 1:] The function $\alpha_c$ is a  computable numbering of a subset of $D_c$, i.e.,

$(\forall e\in \omega) \, W_{\sigma(e)}\in \widehat{D_c}$,
\item[Pr 2:] If $a\in D_c$ and $W_e=\{n\mid \beta(n)\ll a\}$ then $W_e=W_{\sigma(e)}$, i.e.,  $\alpha_c(e)=a$. Therefore ${\rm im}(\alpha_c)=D_c$.
\end{enumerate}
The property Pr~1 follows from the  computability of $g_e$ and the equivalence
 $ n\in W_{\sigma(e)}\leftrightarrow  (\exists s)\, \beta(n)\ll\beta(g_e(s))$.
Now we show the  property Pr~2.
By construction, $\beta(g_e(s))\in W^s_e$, therefore $\beta(g_e(s))\ll a$. So $\alpha_c(e)\leq a$.
In order to prove $\alpha_c(e)\geq a$ we show that if, for some $n\in\omega$,  $\beta(n)\ll a$, i.e., $n\in W_e$  then
there exists $s\in\omega$ such that $\beta(g_e(s+1))\gg\beta(n)$.
Assume the contrary.   We choose the minimal $n\in W_e$ such that $n>0$ and, for all $s\in\omega$, $\beta(g_e(s+1))\not\gg\beta(n)$.
Let us consider the step $s_0$ with property:  if   $k\in W_e$ and $k\leq n$  then $k\in W^{s_0}_e$ and $(\forall k<n)\, \beta(k)\ll\beta(g_e(s_0+1))$.
Now we show that there exists $s_1>s_0$ such that $(\forall m\geq s_1)\, \beta(k)\ll^m\beta(g_e(m))$ for all $k<n$.

Indeed, there exists
 $t>s_0+1$ such that $\beta(k)\ll^t\beta(g_e(s_0+1))$ since $\beta(k)\ll\beta(g_e(s_0+1))$ for all $k<n$.
 In the sequence  $
 \beta(g_e(s_0+1)),\dots,  \beta(g_e(t))
$ of basic elements, either $g_e(i)=g_e(i+1)$ or  $\beta(g_e(i))\ll^{i+1}\beta(g_e(i+1))$ for $s_0+1\leq i<t$.
By the monotonicity of $\{A_s\}_{s\in\omega}$ and the transitivity of $A_t$, we have $\beta(k)\ll^t\beta(g_e(t))$.
Hence $s_1$ can be chosen to be  the minimal $t>s_0+1$ such that $(\forall k<n) \beta(k)\ll^t\beta(g_e(s_0+1))$.
Therefore, for any $s\geq s_1$ and $k<n$,  $\beta(g_e(s))\gg^{s+1}\beta(k)$.  By the construction of $h_e$, we have $h_e(s+1)\neq k$ for any $k<n$.
So $h_e(s+1)=n$ since
$\beta(g_e)(s+1)\not\gg \beta(n)$.  By the construction of $g_e$, for any $s\geq s_1$, we have
$g_e(s+1)=g_e(s)$.
However, by the interpolation property (see Proposition~\ref{interpolation_prop}), from $a\gg\beta(n)$ and $a\gg \beta(g_e(s_1))$ it follows that  there exists $x\in\omega$ such that $a\gg\beta(x)\gg\beta(n)$ and $a\gg\beta(x)\gg\beta(g_e(s_1))$.
Then, for some $s_2>s_1$, we have $a\gg^{s_2}\beta(x)$, $\beta(x)\gg^{s_2}\beta(g_e(s_1))$, $\beta(x)\gg^{s_2}\beta(g_e(s_1))$ and $x\in W^{s_2}_e$.
So, at the stage $s_2+1$, by the construction of $g_e$, $\beta(g_e(s_2+1))\gg^{s_2+1}\beta(n)$.
This contradicts to the assumption. Therefore $\alpha_c(e)\geq a$ and $\alpha_c(e)= a$.
Hence ${\rm im}(\alpha_c)=D_c$ and
 $\widehat{D_c}$ is a $wn$-family.

 \noindent  Req 2:
 Put $b_n=\beta(n)$ and $\mathcal{O}_n=U_n$ for $n>0$.
 By construction, $b_n\leq \mathcal{O}_n$.  Now we show  that
 $\mathcal{O}_n=\bigcup_{b_i\in\mathcal{O}_n}\mathcal{O}_i$.
In order to do that we use the interpolation  property, i.e., if $b_n\ll x$ then there exists $b_i$ such that $b_n\ll b_i\ll x$. Assume $x\in\mathcal{O}_n$. Then, by the  interpolation property, for some $i\in\omega$, $b_i\in\mathcal{O}_n$ and $b_i\ll x$. Hence $b_i\in \mathcal{O}_n$ and $ x\in\mathcal{O}_i$, i.e.,
$x\in \bigcup_{b_i\in\mathcal{O}_n}\mathcal{O}_i$. The inclusion  $\mathcal{O}_n\supseteq\bigcup_{b_i\in\mathcal{O}_n}\mathcal{O}_i$ follows from the monotonicity of the relation $\ll$, i.e., from $b_n\ll b_i\ll x$ it follows that $b\ll x$, so $x\in \mathcal{O}_n$.
 \end{proof}

\begin{thm}[Generalised Rice-Shapiro]\label{RSH}
Let $(X,\tau,\alpha)$ be  a modular $T_0$--space and $K\subseteq \XC$.
Then $Ix(K)$ is computably enumerable if and only if $K$ is effectively open in $X_c$. Moreover
the representation $K=\bigcup_{n\in W_i}\alpha(n)\bigcap K_c$ can be computed by the index of the c.e. set $Ix(K)$.
\end{thm}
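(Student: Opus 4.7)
The easy direction $(\Leftarrow)$ follows from Proposition~\ref{eo_ce}, which gives $\Ix(K)$ c.e.\ from the effective openness of $K$ in $\XC$ using only Req~1. For $(\Rightarrow)$, I adapt the classical Rice-Shapiro proof, replacing the finite c.e.\ approximations by the modular pairs $(b_j,\mathcal{O}_j)$ of Req~2. The adaptation consists of two self-referential steps, both carried out by the recursion theorem: an upward-closure step and a compact-generation step.

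For upward closure I show: if $a,b\in\XC$ with $a\in K$ and $A_a\subseteq A_b$, then $b\in K$. Assume not. By the recursion theorem produce a $\bar{\gamma}$-index $d$ such that $A_{\bar{\gamma}(d)}$ is enumerated by first copying $A_a$ and, upon observing $d$ in some enumeration of $\Ix(K)$, extending to all of $A_b$. Because $A_a\subseteq A_b$ and both lie in $\SX$, the element $\bar{\gamma}(d)$ is well-defined: it is $a$ if $d\notin\Ix(K)$ (so $A_{\bar{\gamma}(d)}=A_a$) and $b$ otherwise. Either case contradicts $a\in K$ together with $b\notin K$.

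For compact generation I show: if $a\in K$, then some $j$ satisfies $a\in\mathcal{O}_j$ and $b_j\in K$. Assume not and run the same template, but when $d$ enters $\Ix(K)$ at stage $s_0$ switch the enumeration from $A_a$ to $A_{b_j}$ for a $j$ satisfying simultaneously $a\in\mathcal{O}_j$ and $A_a^{s_0}\subseteq A_{b_j}$. Such $j$ exists by Lemma~\ref{l_modular_prop} applied to $V=A_a^{s_0}$, since $a\in\bigcap_{i\in V}\alpha(i)=\bigcup_{b_j\in\bigcap_{i\in V}\alpha(i)}\mathcal{O}_j$; moreover such $j$ can be found by a c.e.\ search because $a\in\mathcal{O}_j$ and $b_j\in\alpha(i)$ are both c.e.\ conditions. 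I expect this step to be the main technical obstacle, and it is precisely where Req~2 is used; its absence is what allows the counterexamples of Section~\ref{sec_Counterexamples} to defeat Rice-Shapiro on the broader $wn$-family class.

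These two properties give $K=\bigcup_{j\in J}\mathcal{O}_j\cap\XC$ with $J=\{j\mid b_j\in K\}$: the inclusion $\supseteq$ follows from $b_j\leq\mathcal{O}_j$ and upward closure, and $\subseteq$ is exactly compact generation. Writing $b_j=\bar{\gamma}(f_0(j))$ for a computable $f_0$, which exists because $\{b_j\}_{j\in\omega}$ is computable and $\bar{\gamma}$ is principal, the set $J=f_0^{-1}(\Ix(K))$ is c.e.\ uniformly in an index of $\Ix(K)$; combined with the computable sequence $\{\mathcal{O}_j\}_{j\in\omega}$ of effectively open sets, this yields a c.e.\ set $W_i$ with $K=\bigcup_{n\in W_i}\alpha(n)\cap\XC$, the index $i$ being computable from an index of $\Ix(K)$, which gives the moreover clause.
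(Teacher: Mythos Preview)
Your approach is essentially the paper's: both prove upward closure of $K$ and then a compact-generation/openness statement via a self-referential diagonal construction, and both finish by writing $K=\bigcup_{b_j\in K}\mathcal{O}_j\cap\XC$ with $\{j\mid b_j\in K\}$ c.e.\ from $\Ix(K)$. The paper packages the two diagonal steps through the Branching Lemma (Lemma~\ref{berger}), applied at the level of $W$-indices against the auxiliary c.e.\ set $W=\{n\mid \sigma(n)\!\downarrow\text{ and }W_{\sigma(n)}\in\SK\}$ rather than against $\Ix(K)$.

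One technical point deserves care. You write ``by the recursion theorem produce a $\bar{\gamma}$-index $d$ such that $A_{\bar{\gamma}(d)}$ is enumerated by \dots''. This presumes a fixed-point theorem for $\bar{\gamma}$, but principality alone does not give one; precompleteness of $\bar{\gamma}$ is only asserted in the paper under the additional hypothesis that $X_c$ has a least element, which is not part of Definition~\ref{d_modular_space}. The paper avoids this by running the construction for a $W$-index $e$ (checking $e\in W$, not $d\in\Ix(K)$): since the enumerated set is always some $A_x\in\SX$, the $wn$-property forces $\sigma(e)\!\downarrow$ and $W_{\sigma(e)}=W_e$, and the contradiction is read off from $e\in W$ versus $e\notin W$. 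Your two constructions translate verbatim to this setting (indeed $\Ix(K)=h_0^{-1}(W)$ and $W=h_0(\Ix(K))$), so once you relocate the recursion theorem to $W$-indices the argument is complete and coincides with the paper's.
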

The claim is based on Proposition~\ref{eo_ce} and  the following propositions and lemmas. We use notations from Definition~\ref{wn_family} and Theorem~\ref{numb_comp_elem}.

\begin{lem}[Branching lemma]\cite{Berger_93,ST94}\label{berger}
Let $V$ and $W$ be computably enumerable sets such that $W$ contains all computably enumerable indices of $V$. Let $\{V_p\}_{p\in\omega}$ be a presentation of $V$ and $r:\omega\to\omega$ be a total computable function. Then there are $e\in W$ and $p\in\omega$ such that $W_e=V_p\cup W_{r(p)}$.
Furthermore such $e$ and $p$ are computed uniformly from a computably enumerable index of $W$ and computable indices of the functions $\lambda p.V_p$ and $r$.
\end{lem}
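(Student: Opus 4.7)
The plan is to apply the Recursion Theorem to build a self-referential c.e.\ set whose own index $e$ is forced to belong to $W$. Using the uniform Recursion Theorem I would manufacture an index $e$ (computable from a c.e.\ index of $W$ and computable indices of $\lambda p.V_p$ and $r$) so that the enumeration procedure for $W_e$ proceeds in stages. At stage $s$ it simulates $s$ steps of the enumeration of $W$ and tests whether $e \in W^s$. So long as $e$ has not yet been seen in $W$, at stage $s$ it enumerates $V_s$ into $W_e$, thereby copying the given presentation of $V$. If at some first stage $s_0$ the test $e \in W^{s_0}$ succeeds, the procedure freezes the $V$-component (which at that moment is exactly $V_{s_0}$) and from then on enumerates $W_{r(s_0)}$ into $W_e$, so that ultimately $W_e = V_{s_0} \cup W_{r(s_0)}$.

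Next I would case-split on the behaviour of this $e$. If $e$ is eventually enumerated into $W$, let $p$ be the first stage at which this occurs; the construction guarantees $W_e = V_p \cup W_{r(p)}$ with $e\in W$, which is exactly the desired conclusion. If on the contrary $e$ never enters $W$, then the procedure only ever enumerates the monotone approximations $V_s$, so $W_e = \bigcup_s V_s = V$. Hence $e$ is a c.e.\ index of $V$; but by hypothesis $W$ contains every c.e.\ index of $V$, forcing $e \in W$, contradicting the case assumption. So the first case must occur, and the pair $(e,p)$ is exhibited.

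The uniformity clause is then immediate: the procedure description above depends computably on a c.e.\ index of $W$ and on computable indices of $\lambda p.V_p$ and $r$, so by the uniform form of the Recursion Theorem $e$ itself is computed uniformly from those indices. The corresponding $p$ is obtained uniformly by simulating the enumeration of $W$ until $e$ first appears, and returning the stage number.

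The only delicate point I anticipate is stage bookkeeping: one must align the ``enumerate $V_s$'' clock with the ``observe $e \in W^s$'' clock so that, at the moment the switch is triggered at $s_0$, the accumulated $V$-part is precisely $V_{s_0}$ (and not $V_{s_0-1}$ or $V_{s_0+1}$). This is a routine synchronisation, available because $\{V_p\}$ is strongly computable and monotone, and it can be handled by a minor re-indexing of stages. I do not expect any substantive obstacle beyond the standard appeal to the Recursion Theorem, which carries the real content of the argument.
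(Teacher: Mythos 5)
Your proof is correct: the self-referential enumeration via the Recursion Theorem (copy the presentation $\{V_p\}$ of $V$ until $e$ appears in $W$, then freeze and switch to $W_{r(p)}$, with the case ``$e$ never appears'' forcing $W_e=V$ and hence $e\in W$ by hypothesis) is the standard argument for the Branching Lemma, and your handling of the uniformity clause and the stage-synchronisation caveat is sound. Note that the paper itself gives no proof of this lemma --- it is imported verbatim from Berger and from Stoltenberg-Hansen--Tucker --- so there is nothing in the paper to diverge from; your argument matches the one in those cited sources.
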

 It is worth noting that the original Branching lemma has been proven in \cite{Berger_93} for index sets of partial computable functions. For our purposes it is more convenient to use the modified version of the Branching lemma for  index sets of c.e. sets from \cite{ST94}.

\begin{lem}\label{W_monot}
 For $K\subseteq X_c$ let us define  $W=\{n|n\in {\rm dom}(\sigma) \mbox{ and } W_{\sigma(n)}\in \SK\}$.
If $W$  is computably enumerable then $\SK$ is monotone, i.e., if $A\subseteq B$, $A\in \SK$ and $B\in \SX$ then $B\in \SK$.
\end{lem}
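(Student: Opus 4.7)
The plan is to invoke the Branching Lemma (Lemma~\ref{berger}) with a carefully engineered auxiliary function $r$. Fix $A=A_a$ and $B=A_b$ as in the hypothesis: $A\subseteq B$, $a\in K$, $b\in X_c$; the goal is to show $b\in K$, equivalently $B\in\SK$.

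As a preparatory step, I would first verify that $W$ contains every c.e. index of $V=A_a$, which is exactly what the Branching Lemma requires. Indeed, if $W_n=A_a$ then $W_n\in\SK\subseteq\SX$, so condition (ii) of Definition~\ref{wn_family} forces $n\in{\rm dom}(\sigma)$ and $W_{\sigma(n)}=W_n=A_a\in\SK$, whence $n\in W$.

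The core move is the choice of $r$: pick any c.e. index $c$ of $B$ (which exists because $b\in X_c$) and let $r$ be the constant function with value $c$, so that $W_{r(p)}=A_b$ for every $p$. With $\{V_p\}_{p\in\omega}$ any presentation of $A_a$, Lemma~\ref{berger} then yields indices $e\in W$ and $p\in\omega$ with $W_e=V_p\cup W_{r(p)}$. Since $V_p\subseteq A_a\subseteq A_b$, this collapses to $W_e=A_b$. To finish, $W_e=A_b\in\SX$ triggers condition (ii) of Definition~\ref{wn_family} once more, giving $W_{\sigma(e)}=W_e=A_b$; combined with $e\in W$ this forces $A_b\in\SK$, and the $T_0$ property (injectivity of $x\mapsto A_x$, which follows from $\{\alpha(n)\}$ being a base) delivers $b\in K$.

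The only genuinely subtle point I anticipate is precisely the choice of $r$: any $r$ whose output $W_{r(p)}$ varied nontrivially with $p$ would leave $W_e$ out of reach, because the Branching Lemma only guarantees existence of some $p$ and gives no handle on which. Taking $r$ constant collapses the branching entirely and funnels the resulting c.e.\ set directly onto $B$, which is the mechanism by which $B$ is forced into $\SK$.
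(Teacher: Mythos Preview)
Your proof is correct and follows the same route as the paper: apply the Branching Lemma with $V=A$, a constant function $r$ pointing at a c.e.\ index of $B$, and then use $V_p\subseteq A\subseteq B$ to collapse $W_e$ to $B$, concluding via the $wn$-property that $B\in\SK$. The paper's only cosmetic difference is that it writes $A=W_{\sigma(m)}$, $B=W_{\sigma(n)}$ and takes $r(p)\equiv\sigma(n)$, whereas you pick an arbitrary c.e.\ index $c$ of $B$; your final remark about the $T_0$ property is an extra observation beyond what the lemma actually asserts (the conclusion is $B\in\SK$, not $b\in K$), but it does no harm.
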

\begin{proof}
Let $A=W_{\sigma(m)}$ and $B=W_{\sigma(n)}$. Define $V=A$ and $r(p)\equiv \sigma(n)$.
It is worth noting that,  for any $i\in\omega$, if $W_l=W_{\sigma(m)}\in \SK$ then $W_{\sigma(l)}=W_l\in \SK$, so that $l\in W$. Therefore $W$ contains all computably enumerable indices of $V$. Hence  we can use Lemma~\ref{berger} to find
 $e\in W$ and $p\in\omega$ such that $W_e=V_p\cup W_{r(p)}=W^p_{\sigma(m)}\cup W_{\sigma(n)}=W_{\sigma(n)}$. Since $B\in \SX$ and $e\in W$, $W_{\sigma(e)}=W_e\in \SK$. Therefore  $B\in \SK$.
\end{proof}
\begin{prop}\label{monoton}
For $K\subseteq \XC$, if $Ix(K)$ is c.e. then from $a\in K$ and $a\leq b$ it follows that $b\in K$.
\end{prop}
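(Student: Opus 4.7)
The plan is to reduce the claim to Lemma~\ref{W_monot}, using the $wn$-family structure of $\widehat{X_c}$ (which holds by Req~1, since $X$ is modular) to move between $Ix(K)$ and the ``c.e.\ indices'' formulation $W = \{n \mid n \in {\rm dom}(\sigma) \text{ and } W_{\sigma(n)} \in \widehat{K}\}$ needed by that lemma.

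First, I would unpack the principal computable numbering $\bar{\gamma}:\omega \to X_c$ coming from Theorem~\ref{numb_comp_elem}, namely $A_{\bar{\gamma}(n)} = \gamma(n) = W_{\sigma(h_0(n))}$ with $h_0$ total computable and ${\rm im}(h_0) = {\rm dom}(\sigma)$. Because $X$ is $T_0$, a point $a \in X_c$ is determined by the set $A_a$, so the equivalences
\[
n \in Ix(K) \iff \bar{\gamma}(n) \in K \iff W_{\sigma(h_0(n))} \in \widehat{K} \iff h_0(n) \in W
\]
hold. Consequently $W = h_0(Ix(K))$, and since $h_0$ is total computable, $Ix(K)$ c.e.\ immediately yields $W$ c.e.

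Next, Lemma~\ref{W_monot} applied to this $W$ gives monotonicity of $\widehat{K}$: whenever $A \subseteq B$ with $A \in \widehat{K}$ and $B \in \widehat{X_c}$, we get $B \in \widehat{K}$. Now suppose $a \in K$ and $a \leq b$ in the specialisation order. By the very definition of $\leq$, $a \in \alpha(n)$ implies $b \in \alpha(n)$, i.e., $A_a \subseteq A_b$. Since $a \in K \subseteq X_c$ we have $A_a \in \widehat{K}$, and since $b \in X_c$ we have $A_b \in \widehat{X_c}$. Monotonicity then yields $A_b \in \widehat{K}$, so there is some $a' \in K$ with $A_{a'} = A_b$; by the $T_0$ property $a' = b$, hence $b \in K$.

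The one genuinely delicate step is the transfer $Ix(K) \leadsto W$ in the first paragraph: a priori the indices appearing in $W$ are arbitrary Kleene indices $n$ whose sets $W_{\sigma(n)}$ happen to be realised in $\widehat{K}$, which is \emph{not} the same indexing as $\bar{\gamma}$. The key is that the $wn$-family selector $\sigma$ and the surjection $h_0 : \omega \twoheadrightarrow {\rm dom}(\sigma)$ together bridge the two indexings uniformly, so one can cleanly read off $W = h_0(Ix(K))$. Once that identification is in hand, the rest is routine application of Lemma~\ref{W_monot} and the definition of the specialisation order.
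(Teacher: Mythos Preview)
Your proposal is correct and follows essentially the same route as the paper, which simply says the claim follows from Lemma~\ref{W_monot} together with the observation that $a\leq b$ implies $A_a\subseteq A_b$. Your contribution is to make explicit the passage from ``$Ix(K)$ is c.e.'' to ``$W$ is c.e.'' via the identity $W=h_0(Ix(K))$ (using ${\rm im}(h_0)={\rm dom}(\sigma)$ and the $T_0$ property), which the paper leaves implicit.
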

\begin{proof}
The claim follows from Lemma~\ref{W_monot} and the property that if $a\leq b$ then $A_a\subseteq A_b$.
\end{proof}
\begin{prop}\label{ce_openK} Let $K\subseteq X_c$. If $\Ix(K)$ is c.e. then $K$ is open in $X_c$.
\end{prop}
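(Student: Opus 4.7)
The plan is to reduce openness of $K$ at each $a \in K$ to the existence of a basic approximation $b_i$ with $a \in \mathcal{O}_i$ and $b_i \in K$, and then to produce such an $i$ via the Branching Lemma, arguing by contradiction.

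By Proposition~\ref{monoton}, $K$ is upward-closed in $X_c$ under the specialisation order. Hence, if some index $i$ satisfies $a \in \mathcal{O}_i$ and $b_i \in K$, then Req~2(a) forces $b_i \leq x$ for every $x \in \mathcal{O}_i$, so $\mathcal{O}_i \cap X_c \subseteq K$, and the open set $\mathcal{O}_i$ witnesses openness of $K$ at $a$. It therefore suffices to exhibit, for each $a \in K$, some such $i$.

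Fix $a \in K$ and assume toward contradiction that $b_i \notin K$ whenever $a \in \mathcal{O}_i$. I apply the Branching Lemma (Lemma~\ref{berger}) with $V = A_a$ (c.e.\ since $a \in X_c$) and $W = \{n \in \mathrm{dom}(\sigma) \mid W_{\sigma(n)} \in \SK\}$. The set $W$ is c.e.: using the surjection $h_0$ from Theorem~\ref{numb_comp_elem}, $n \in W$ iff there exists $m$ with $h_0(m) = n$ and $m \in \Ix(K)$, and $\Ix(K)$ is c.e.\ by hypothesis. Every c.e.\ index of $A_a$ lies in $W$, since $A_a \in \SX$ forces membership in $\mathrm{dom}(\sigma)$ and $a \in K$ forces $A_a \in \SK$. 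Fix a standard presentation $\{V_p\}$ of $A_a$.

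To design a total computable $r$, I exploit modularity. Since $V_p \subseteq A_a$, Lemma~\ref{l_modular_prop} yields some $b_j$ lying in $\bigcap_{m \in V_p}\alpha(m)$ with $a \in \mathcal{O}_j$, so the predicate ``$V_p \subseteq A_{b_j}$ and $a \in \mathcal{O}_j$'' is nonempty for every $p$ and is uniformly c.e.\ in $p$ (via a fixed c.e.\ index of $A_a$ together with the computable sequences $\{A_{b_n}\}$ and $\{\mathcal{O}_n\}$). Let $j(p)$ be the first such $j$ enumerated and take $r(p)$ to be a uniformly obtained index of $A_{b_{j(p)}}$. The Branching Lemma then produces $e \in W$ and $p$ with $W_e = V_p \cup W_{r(p)} = A_{b_{j(p)}}$ (the second equality because $V_p \subseteq A_{b_{j(p)}}$). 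From $e \in W$ we obtain $A_{b_{j(p)}} \in \SK$, and since $X$ is $T_0$, this forces $b_{j(p)} \in K$, contradicting $a \in \mathcal{O}_{j(p)}$. The chief obstacle I anticipate is precisely this effective design of $r$: both the existence and the c.e.\ locatability of a suitable $j(p)$ rely essentially on Req~2 in conjunction with Lemma~\ref{l_modular_prop}.
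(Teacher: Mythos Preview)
Your argument is correct and follows essentially the same route as the paper: both set $V=A_a$, $W=\{n\in\mathrm{dom}(\sigma)\mid W_{\sigma(n)}\in\SK\}$, use Lemma~\ref{l_modular_prop} to effectively locate some $b_j$ with $V_p\subseteq A_{b_j}$ and $a\in\mathcal{O}_j$, feed $W_{r(p)}=A_{b_{j(p)}}$ into the Branching Lemma, and derive the contradiction $A_{b_{j(p)}}\in\SK$ versus $b_{j(p)}\notin K$. The only cosmetic differences are that you front-load the reduction to ``find $i$ with $a\in\mathcal{O}_i$ and $b_i\in K$'' and explicitly justify that $W$ is c.e., whereas the paper assumes $K$ not open and picks a witnessing $a$ directly.
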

\begin{proof}
Let us fix the sequences $\{\mathcal{O}_n\}_{n\in\omega}$ and $\{b_n\}_{n\in\omega}$ for $X$ from Definition~\ref{d_modular_space}
 and denote $C=Ix(K)$.
Assume that $K$ is not open. Then there exists $a\in K$ such that, for all $\mathcal{O}_n$, $n\in\omega$, if
 $a\in \mathcal{O}_n$ then
$\mathcal{O}_n\cap \XC\not\subseteq K$. Let us fix  one of such $a$ and a presentation $\{V_s\}_{s\in\omega}$ of $A_a$.
 Then we define $U_m=\{k\mid\bar{\gamma}(k)\in\bigcap_{i\in V_m}\alpha(i)\}$. It is easy to see that
 $\{U_m\}_{m\in\omega}$ is a computable sequence of c.e. sets and every $\bar{\gamma}(U_m)$ contains $a$.
Our goal is to construct a  computable function $h:\omega\to\omega$ such that $h(m)\in U_m\setminus C$.
First we note that there exists $n=n(m)\in\omega$ such that
\begin{enumerate}
\item $a\in\mathcal{O}_n$ and
\item $b_n\in\bigcap_{i\in V_m}\alpha(i)$.
\end{enumerate}
Indeed,  by Lemma~\ref{l_modular_prop},
\begin{align*}
&a\in \bigcap_{i\in V_m}\alpha(i)=
\!\!\!\!\!\!\bigcup_{ b_j\in \!\!\bigcap\limits_{i\in V_m}\!\!\alpha(i)}\!\!\!\!\!\!\mathcal{O}_j.
\end{align*}

The existence of a required $n$ follows from the formula above.
 Since this search is effective, the function $n(m)$ is computable.

 It is worth noting that, by assumption, $\mathcal{O}_n\cap \XC\not\subseteq K$ and,  by Req 2, $b_n\leq \mathcal{O}_n$. From  Proposition~\ref{monoton} it follows that $b_n\not\in K$.
 At the same time, by construction, $\bar{\gamma}^{-1}(b_n)\subseteq U_m$.
 Since $\bar{\gamma}$ is a principal computable numbering there exists a computable function $f:\omega\to\omega$ such that $b_n=\bar{\gamma}(f(n))$.
 Hence we can put
$h(m)=f(n(m))$. By construction, $h(m)\in U_m\setminus C$.

Now we are ready to use Lemma~\ref{berger}.
We put $V=A_a$, $W=\{n|n\in {\rm dom}(\sigma) \mbox{ and } W_{\sigma(n)}\in \SK\}$ and the computable function $r:\omega\to\omega$ satisfying $W_{r(p)}=\gamma(h(p))$.
Since $a$ is a~computable element it follows that  for any $n\in\omega$ if $W_n=V$ then $W_{\sigma(n)}=W_n\in \SK$, so  $n\in W$. Therefore $W$ contains all computably enumerable indices of $V$. We can use Lemma~\ref{berger} to find
$e\in W$  and $p\in\omega$ such that
\begin{eqnarray*}
W_e=V_p\cup \gamma(h(p)),
\end{eqnarray*}
where $V_p$ is defined above.  If $i\in V_p$ then $\bar{\gamma}(h(p))\in\alpha(i)$, i.e., $i\in \gamma(h(p))=A_{\bar{\gamma}(h(p))}$. Therefore  $V_p\subseteq \gamma(h(p))$ and $W_e= \gamma(h(p))$. On the one hand $W_e=W_{\sigma(e)}\in \SK$ since $W_e\in \SX$ and $e\in W$ on the other hand
$\gamma(h(p))\not\in \SK$ by  the construction of $h$. We get a  contradiction.
\end{proof}

\begin{proof}(Theorem~\ref{RSH})

\noindent $\rightarrow).$   By Proposition~\ref{ce_openK} and Req 2, $K=\bigcup_{b_n\in K} \mathcal{O}_n\cap\XC$. Since $\Ix(K)$ is c.e. and $\{b_n\}_{n\in\omega}$ is a computable sequence, the set $\{n\mid b_n\in K\}$ is c.e. Therefore $K$ is effectively open in $X_c$.

\noindent $\leftarrow).$ The claim follows from Proposition~\ref{eo_ce}.
\end{proof}

\section{Several Constructions of \ee $T_0$-spaces}\label{sec_constructions}
In this section we show a few general approaches  for constructing \ee $T_0$-spaces with particular properties.

\subsection{From  a $wn$-family to an \ee $T_0$-space}\label{subsec_from_wn_to_top_space}
In this subsection we show how to construct from any   $wn$-family $(S,\gamma)$  an \ee $T_0$-space which inherits  some of the properties of $S$.

Assume that $S\subseteq \mathcal{P}(\omega)$ is a $wn$-family  and $\gamma:\omega\to S$ is its principal computable numbering.
 %
 First, it is worth noting that $S$ can be considered as a subspace of $\mathcal{P}(\omega)$ with the Scott topology.  Let $X=(S,\tau,\beta)$, where $\beta(n)=\{V\in S\mid D_n\subseteq V\}$. In order to show that $X$ is an \ee $T_0$-space $X$ we consider its homeomorphic copy $(X_S,\tau_S,\alpha_S)$ defined as follows.
 \begin{eqnarray*}
&&X_S=\omega\slash\sim, \mbox{ where } i\sim j\leftrightarrow \gamma(i)=\gamma(j);\\
&& \, [i]_\sim\in \alpha_S(n) \leftrightarrow D_n\subseteq \gamma(i).
 \end{eqnarray*}
\begin{prop}\hfil\label{constr_S_to_X}
\begin{enumerate}
\item $(X_S,\tau_S,\alpha_S)$ is an \ee $T_0$--space such that every element of $X_S$ is computable and $\widehat{X_S}$ is a $wn$-family.
\item The topological spaces $X$ and $X_S$ are homeomorphic.
\item The space $X$ is an \ee $T_0$--space such that  every element of $X$ is computable and $\widehat{X}$ is a $wn$-family.
\item Let  $K\subseteq S$. The transfer from $(S,\gamma)$ to $X$ preserves effective openness of $K$ and its index sets.

\end{enumerate}
\end{prop}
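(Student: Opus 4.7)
The plan is to prove (1) in detail, derive (2) by direct definition-chasing, obtain (3) as a transfer from (1) along the homeomorphism in (2), and finally unfold the definitions in (4) to reduce it to (1)--(3). For (1), I would verify the constituent claims in order. The $T_0$ axiom holds because $[i]_\sim\neq[j]_\sim$ implies $\gamma(i)\neq\gamma(j)$, so picking any $k$ in the symmetric difference and the $n$ with $D_n=\{k\}$ yields a separating basic set $\alpha_S(n)$. The first clause of Definition~\ref{ee} reduces to a single-term union since, whenever $D_k=D_i\cup D_j$ (with $k$ computable in $(i,j)$), we have $\alpha_S(i)\cap\alpha_S(j)=\alpha_S(k)$. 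The second clause holds because $\alpha_S(n)\neq\emptyset$ iff there exists $i$ with $D_n\subseteq\gamma(i)$, which is c.e. uniformly in $n$ by computability of $\gamma$. Each $[i]_\sim$ is computable because $A_{[i]_\sim}=\{n\mid D_n\subseteq\gamma(i)\}$ is c.e.\ uniformly in $i$.

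The main obstacle is showing that $\widehat{X_S}$ is a $wn$-family. Let $\sigma$ witness the $wn$-property of $S$. The idea is to construct an effective round-trip between c.e.\ indices of members of $\widehat{X_S}$ and c.e.\ indices of members of $S$. First, define a computable $t_1$ with $W_{t_1(n)}=\{m\mid k_m\in W_n\}$, where $k_m$ is a fixed computable index with $D_{k_m}=\{m\}$; then $W_n=A_{[i]_\sim}$ forces $W_{t_1(n)}=\gamma(i)$. Dually, define a computable $t_2$ with $W_{t_2(e)}=\{k\mid D_k\subseteq W_e\}$; then $W_e=\gamma(i)$ forces $W_{t_2(e)}=A_{[i]_\sim}$. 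Setting $\sigma'(n)\simeq t_2(\sigma(t_1(n)))$ produces the desired witness: if $W_n=A_{[i]_\sim}\in\widehat{X_S}$, the round-trip returns $W_n$, while if $\sigma'(n)\downarrow$ then $W_{\sigma(t_1(n))}\in S$ and $W_{\sigma'(n)}$ is the corresponding member of $\widehat{X_S}$.

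For (2), the map $\phi:X_S\to X$ sending $[i]_\sim\mapsto\gamma(i)$ is a well-defined bijection, and $\phi^{-1}(\beta(n))=\alpha_S(n)$ by direct unpacking, so $\phi$ is a homeomorphism that moreover intertwines the two indexed bases. For (3), I would push the effective structure through $\phi$: the $ee$-conditions, computability of each element, and the $wn$-family property are all stated in terms of basic-set indices and c.e.\ sets of naturals, all of which $\phi$ preserves verbatim. For (4), effective openness of $K\subseteq S$ in $(S,\gamma)$ unfolds as $K=\bigcup_{n\in E}\{V\in S\mid D_n\subseteq V\}$ for a c.e.\ set $E$, which is exactly $K=\bigcup_{n\in E}\beta(n)\cap S$ in $X$; and the principal computable numbering $\gamma$ of $S$ is computably equivalent, via the translations $t_1,t_2$, to the canonical numbering $\bar{\gamma}$ of $X_c$ supplied by Theorem~\ref{numb_comp_elem}, so the $\gamma$-index set and the $\bar{\gamma}$-index set of $K$ have the same c.e.\ status.
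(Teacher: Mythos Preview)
Your proposal is correct and follows essentially the same approach as the paper. The only cosmetic difference is in the first leg of the round-trip for the $wn$-witness: you extract $\gamma(i)$ from $A_{[i]_\sim}$ via singletons ($W_{t_1(n)}=\{m\mid k_m\in W_n\}$), whereas the paper uses the union $W_{g(m)}=\bigcup_{n\in W_m}D_n$; both land on $\gamma(i)$ when $W_n=A_{[i]_\sim}$, and the remaining composition $t_2\circ\sigma\circ t_1$ is identical to the paper's $\sigma^\ast$.
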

\begin{proof}
\noindent $(1)$ It is clear that $\alpha_S(n)\cap \alpha_S(n)=\alpha_S(l)$, where $D_l=D_n\cup D_m$, and the set $\{i\mid\alpha_S(i)\neq\emptyset\}$ is computably enumerable.
Therefore  $(X_S,\tau_S,\alpha_S)$ is an \ee $T_0$--space.
All elements of $X_S$ are computable since $\{n\mid i\in\alpha_S(n)\}=\{n\mid D_n\subseteq \gamma(i)\}$.
Let us  define $\gamma^\ast(i)=\{n\mid D_n\subseteq \gamma(i)\}$.
Now we show that $\widehat{X_S}$ is a $wn$-family.
It is worth noting that $S$  consists of   all $W_{\sigma(n)}$ and if $W_n\in S$ then $W_{\sigma(n)}=W_n$.
It is easy to see that there exists a computable function $g:\omega\to \omega$ such that $W_{g(m)}=\bigcup_{n\in W_m}D_n$. Define $\sigma^\ast:\omega\to\omega$ as follows:
\begin{eqnarray*}
W_{\sigma^\ast(m)}=\{n\mid D_n\subseteq W_{\sigma(g(m))}\}.
\end{eqnarray*}
We check the following properties:
\begin{enumerate}[label=(P \arabic*)]
\item[(P1)] If $\sigma^\ast(m)\downarrow$ and $W_{\sigma(g(m))}=\gamma(i)$ then $W_{\sigma^\ast(m)}=\gamma^\ast(i)$, i.e.,  $W_{\sigma^\ast(m)}\in \widehat{X_S}$.
    \item[(P2)] If $W_m\in\widehat{ X_S}$ then $W_{\sigma^\ast(m)}=W_m$.
\end{enumerate}
In order to show (P1) assume that $\sigma^\ast(m)\downarrow$, so  $\sigma(g(m))\downarrow$.
Then $W_{\sigma(g(m))}=\gamma(i)\in S$ for some $i\in\omega$. Therefore $W_{\sigma^\ast(m)}=\gamma^\ast(i)\in \widehat{X_S}$. In order to show (P2) assume that $W_m\in \widehat{X_S}$.  By definition it means that $W_m=\{n\mid D_n\subseteq \gamma(i)\}$ for some $i\in\omega$. Therefore $W_{g(m)}=\gamma(i)$ and $W_{\sigma(g(m))}=W_{g(m)}$. As a corollary,
\begin{eqnarray*}
W_{\sigma^\ast(m)}=\{n\mid D_n\subseteq \gamma(i)\}=\gamma^\ast(i)=W_m.
\end{eqnarray*}
Hence $\widehat{X_S}$ is a $wn$-family.

\noindent $(2)$ Let us define $f:\omega\slash\sim\to S$ as $f([i]_\sim)=\gamma(i)$ and $\beta(i)=f(\alpha_S(i))$. It is clear that $f$ is a homeomorphism between $([i]_\sim),\tau_S,\alpha_S)$ and  $(S,\tau,\beta)$.

\noindent The claims (3) and (4) are straightforward by  the construction and the previous  claims.
 \end{proof}
\subsection{From a tree $T$ to a modular $T_0$-space}\label{subsec_from_tree_to_top_space}
In this subsection we show how to generate a modular $T_0$--space  $X_T$  from any computable tree $T$ without computable infinite paths.
We use the standard notations  $\omega^{<\omega}$  and  $\omega^\omega$ and, by default, we endow the set $\omega^\omega$ with the standard order  $ x\sqsubseteq y \equiv (\forall i\in\omega)  \mbox{ if }x(i)\downarrow \mbox{ then } y(i)=x(i)$ (c.f. \cite{Rogers}).
We take a standard  agreement  that a  downward  closed nonempty subset $T\subseteq {\omega}^{<\omega}$ is a tree.
Below we also use the lexicographic and Kleene-Brouwer  orders   on $\omega^{<\omega}$ defined as follows.
\begin{eqnarray*}
&& x\preceq y \equiv x\sqsubseteq y \vee (\exists i\in \omega)(\forall j< i) x(j)=y(j) \wedge x(i)<y(i);\\
&& x\leq_{KB} y \equiv x \sqsupseteq y \vee (\exists i\in \omega)(\forall j< i)(x(j)=y(j)\wedge x(i)<y(i)).
\end{eqnarray*}
A set $p\subseteq T$ is called a {\em partial path} if $p$ is downward  closed and linear ordered by $\sqsubseteq$. A maximal partial path is called  a {\em path}.
Let $[T]_{p}$ denote all partial paths and  $[T]^{fin}_{p}$ denote all finite partial paths.
It is easy to see that there is a straightforward bijective
correspondence between the  vertices of a tree and its finite
nonempty partial paths. Indeed, if 
$x\in T$ then the corresponding path is $p_x=\{y\mid y\sqsubseteq x\}$. Also every finite partial path is equal to $p_x$ for an appropriate $x\in T$. It is worth noting that  infinite paths correspond to elements of $\omega^\omega$.
 In a natural way we define the order $\sqsubseteq$ for partial paths as follows.
\begin{eqnarray*}
&& p \sqsubseteq q \equiv p \subseteq  q.
\end{eqnarray*}
 It is easy to see that if $x\sqsubseteq  y$ then  $p_x \sqsubseteq p_y$. Below we use the notation  $x \sqsubseteq p$ if   $p_x \sqsubseteq p$.

Now we assume that  $T$ is  a computable (recursive) \cite{Rogers} tree  without computable infinite paths and   $ \delta:\omega\to \omega^{<\omega}$ is a canonical bijective computable  numbering.  Put $S_T=\delta^{-1}([T]^{fin}_{p})$.
In order to show that $S_T$ is a $wn$-family
 we  construct $W_{\sigma(n)}$ by stages.

 \noindent $\bf{Stage\, 0}$. $W^{ 0}_{\sigma(n)}=\emptyset$.

  \noindent $\bf{Stage\, s+1}$.
  First,  we define $B\subseteq W^s_n$ with the following properties: if $b\in B$ then
  \begin{enumerate}
  \item $\delta(b)\in T$,
  \item  $(\forall c\in\omega)\, \delta(c)\preceq\delta(b)\rightarrow c\in W^s_n,$
 \item  there is no $\tilde{b}\in  W^s_n$ such that $\delta(b)\le_{KB}\delta( \tilde{b})$,  $\delta(\tilde{b})\not\preceq \delta(b)$ and $\tilde{b}\in W^{ s}_{\delta(n)}.$
  \end{enumerate}
 Then,  we choose ${ b}\in B$ such that $\delta({ b})$ is a $KB$-min element in $\delta(B)$  and define $$W^{ s+1}_{\sigma(n)}= W^{ s}_{\sigma(n)}\cup \{c\mid \delta(c)\preceq\delta({ b})\}.$$

   \noindent Put $W_{\sigma(n)}=\bigcup_{{ s}\in\omega}W^{ s}_{\sigma(n)}$ and $S=\{W_{\sigma(n)}\mid n\in\omega\}$. Let us point out properties of $S$.
   \begin{enumerate}
   \item Every $W_{\sigma(n)}$ is finite since there are no computable infinite paths in $T$. In other words,  $S$ contains only the pre-images of all finite paths in $T$ under $\delta$.
   \item If $W_n\in S$ then, by construction, we put all its elements in $W_{\sigma(n)}$. So $W_{\sigma(n)}=W_n$.
   \end{enumerate}
 Therefore   $S=S_T$ and as  a corollary   $S_T$ is a $wn$-family.

Define $X_T=[T]^{fin}_p$,  $\alpha(i)=\mathfrak{A}_{\delta(i)}$, where
$\mathfrak{A}_x=\{p\in[T]^{fin}_p\mid x\sqsubseteq p\}$.

\begin{prop}\label{X_T}
The space $(X_T,\tau,\alpha)$   is a modular $T_0$--space.
\end{prop}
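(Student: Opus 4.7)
The plan is to verify each condition of Definition~\ref{d_modular_space} for $(X_T,\tau,\alpha)$, leveraging the recursiveness of $T$ and the $wn$-family structure of $S_T$ already established just above the proposition.

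First I would check the effectively enumerable $T_0$ properties. Distinct finite partial paths $p \neq q$ differ at some vertex $x$, which yields a separating basic open $\mathfrak{A}_x$, so $X_T$ is $T_0$. Condition~(1) of Definition~\ref{ee} is handled by observing that $\mathfrak{A}_{\delta(i)} \cap \mathfrak{A}_{\delta(j)}$ is either empty or equals the single basic open $\mathfrak{A}_{\delta(k)}$ for the $\sqsubseteq$-larger of $\delta(i),\delta(j)$ (when they are $\sqsubseteq$-comparable); comparability under $\sqsubseteq$ is decidable. Condition~(2) reduces to the decidability of $\delta(m)\in T$, since $\alpha(m)\neq\emptyset$ iff $\delta(m)\in T$ (with witness $p_{\delta(m)}$). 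Next, every $p_x \in X_T$ is a computable element, since
\[
A_{p_x} \;=\; \{n \mid \delta(n) \in p_x\} \;=\; \delta^{-1}(p_x),
\]
which is finite and uniformly c.e.\ in $x$, and consequently $\widehat{X_T} = \{\delta^{-1}(p) \mid p \in [T]^{fin}_p\} = S_T$. Req~1 is therefore \emph{exactly} the $wn$-family property already established for $S_T$ via the stage-wise construction of $\sigma$.

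For Req~2, since $T$ is recursive there is a total computable $f:\omega\to\omega$ whose image enumerates $\{n \mid \delta(n)\in T\}$. I would then set $b_n = p_{\delta(f(n))}$ and $\mathcal{O}_n = \mathfrak{A}_{\delta(f(n))} = \alpha(f(n))$. Both sequences are computable by construction, $\{\mathcal{O}_n\}$ trivially so and $\{b_n\}$ because $A_{b_n}=\{k\mid\delta(k)\sqsubseteq\delta(f(n))\}$ is uniformly decidable in $n$. On $X_T$ the specialisation order $\leq$ coincides with inclusion of partial paths, so $\delta(f(n))\in q$ for every $q\in\mathcal{O}_n$ immediately gives $b_n = p_{\delta(f(n))}\subseteq q$, i.e.\ $b_n\leq\mathcal{O}_n$. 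The equality $\alpha(m) = \bigcup_{b_i\in\alpha(m)}\mathcal{O}_i$ splits into the trivial $\supseteq$-direction from $b_i\leq\mathcal{O}_i$ together with $\alpha(m)$ being upward-closed under $\leq$, and a $\subseteq$-direction where, for nonempty $q\in\alpha(m)$, one picks $i$ with $\delta(f(i))$ equal to the $\sqsubseteq$-maximum vertex of $q$; then $b_i=q$ itself is in $\alpha(m)$ and $q\in\mathcal{O}_i$.

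The main obstacle is not any single hard computation but the bookkeeping needed to reconcile index-level manipulations (indices $n$ under $\delta$, some of which may land outside $T$) with element-level statements about genuine finite partial paths in $X_T$; the computable selector $f$ is precisely what makes this passage uniform. Once $\{b_n\}$ and $\{\mathcal{O}_n\}$ are set up this way, Req~2 collapses to the transitivity of $\sqsubseteq$ together with the bijective correspondence between vertices of $T$ and nonempty finite partial paths, while Req~1 is just an unpacking of the already-established $wn$-family property of $S_T$.
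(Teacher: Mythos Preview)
Your proposal is correct and follows essentially the same route as the paper's own proof: the paper also notes that intersections of basic opens are either empty or a single basic open, that $\alpha(m)\neq\emptyset$ iff $\delta(m)\in T$, that $A_p=\delta^{-1}(p)$ so $\widehat{X_T}=S_T$ (giving Req~1), and for Req~2 it enumerates $\delta^{-1}(T)$ as $c_1<c_2<\dots$ and sets $b_n=p_{\delta(c_n)}$, $\mathcal{O}_n=\mathfrak{A}_{\delta(c_n)}$, which is exactly your $f$-based construction up to the choice of enumeration. The only difference is expository: the paper dismisses the verification of Req~2 with ``it is clear'', whereas you spell out both inclusions and the coincidence of $\leq$ with $\sqsubseteq$ in detail.
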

\begin{proof}
By definition, $\mathfrak{A}_u\cap \mathfrak{A}_v= \mathfrak{A}_u$ if $u\sqsubseteq v$ and $\mathfrak{A}_u\cap \mathfrak{A}_v= \emptyset$ if $u$ and $ v$ are incomparable.
 The set $\{i\mid\alpha(i)\neq\emptyset\}$ is
 computably enumerable since $T$ is computable. So $X_T$ is an  \ee $T_0$--space, moreover
 every element of $X_T$ is computable. For  $p\in X_T$, $A_p=\delta^{-1}(p)$ hence
 $\widehat{X_T}=S_T$. Therefore  $\widehat{X_T}$ is a $wn$-family.
 Since $T$ is computable, $\delta^{-1}(T)$ is computable. Let $\delta^{-1}(T)=\{c_1<c_2<\dots<c_n\dots\}$.
 Now  we define $b_n=\{x\in T\mid x\sqsubseteq \delta(c_n)\}$ and $\mathcal{O}_n=\mathfrak{A}_{\delta(c_n)}$. It is clear that $X_T$ satisfies Req 1 and Req 2 of Definition~\ref{d_modular_space}.
 It is worth noting that the order $\sqsubseteq$ coincides with the specialisation order $\leq$.
\end{proof}
Let us note that if in a  standard way one     identifies vertices and finite partial paths of $T$ then $X_T=T$ with the topology formed by $\mathcal{U}_x=\{a\in T\mid x\sqsubseteq a\}$.

\section{Counterexamples}\label{sec_Counterexamples}
 Using the techniques from Section~\ref{sec_constructions} we construct
a few examples of \ee $T_0$-spaces with particular properties.

\subsection{Without the Rice-Shapiro theorem}\label{subsec_without_Rice_Shapiro}
In this subsection we provide an \ee $T_0$-space $X$ such that $\SX$ is a $wn$-family however
 the Rice-Shapiro theorem does not hold for the computable elements.

\begin{defi}\cite{Ershov_Num_73_1,Ershov_Num_75_2}
We say that $S\subseteq {\mathcal P}(\omega)$ is effectively discrete   if there exists a strongly computable family $\{F_n\}_{n\in\omega}$ of finite subsets of $\omega$ such that
\begin{eqnarray*}
(\forall A\in S) (\exists\,  n\in\omega) A\supseteq F_n\wedge (\forall A\in S)(\forall B\in S)
( A\supseteq F_n\wedge  B\supseteq F_n \rightarrow A=B).
\end{eqnarray*}
\end{defi}

It is worth noting that $K\subseteq S$ is effectively open in $S$ considered as a subspace of ${\mathcal P}(\omega)$
with the Scott topology  if and only if  there exists a strongly computable family $\{F_n\}_{n\in\omega}$ of finite subsets of $\omega$ such that
$
(\forall A\in S) \left( A\in K \leftrightarrow (\exists n\in\omega) A\supseteq F_n\right).
$
\begin{defi}\cite{Ershov_Num_2}
Let $S$ be  a set of computably enumerable subsets of $\omega$. A  computable numbering $\gamma:\omega\to S$ is called  a positive computable numbering if
 $\{(n,m)\mid \gamma(n)=\gamma(m)\}$ is computably enumerable.
 \end{defi}
\begin{prop}\cite{Vjugin}
There exists a $wn$-family $S\subseteq \mathcal{P}(\omega)$  with a   positive  principal computable numbering that is not effectively discrete.
\end{prop}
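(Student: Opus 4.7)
The plan is to construct such an $S$ by a direct stage-by-stage construction, simultaneously building the family, its principal numbering $\gamma:\omega\to S$, the $wn$-witness $\sigma$, and explicitly enumerating the equality relation $\{(n,m) \mid \gamma(n)=\gamma(m)\}$ to guarantee positivity. The idea is analogous in spirit to the tree construction of Subsection~\ref{subsec_from_tree_to_top_space}: obstructions to effective discreteness are generated by including properly comparable (under $\subseteq$) elements in $S$, while $wn$-ness is preserved by keeping the family ``recognisable'' from arbitrary c.e.\ indices via a standard matching trick.

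At stage $s$, I would maintain finite approximations $A_n^s\subseteq A_n=\gamma(n)$ of the members of $S$, together with a growing list $\mathrm{Eq}^s$ of declared collapses $\gamma(n)=\gamma(m)$. The $wn$-witness $\sigma$ is defined in the usual way: on input $e$, $\sigma(e)$ searches for the first stage at which $W_e$ appears to coincide via an effective matching with one of the current approximations $A_n^s$, at which point it returns a canonical c.e.\ index of $A_n$. Positivity is enforced by the rule that $\gamma(n)=\gamma(m)$ holds only when $(n,m)$ has been explicitly enumerated into $\mathrm{Eq}$. To defeat effective discreteness I would meet requirements $N_e$: given the $e$-th strongly computable sequence $\{F^{(e)}_k\}_k$, force either some $A_n\in S$ to contain no $F^{(e)}_k$, or some $F^{(e)}_k$ to be contained in two members of $S$ that are kept distinct by adjoining a fresh ``tag'' element to exactly one of them. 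Principality of $\gamma$ is obtained by using the canonical $wn$-principal numbering of Theorem~\ref{numb_comp_elem}, adapted so that the construction's declared collapses coincide with the numbering's identifications.

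The hard part will be reconciling the $wn$-property together with positivity and principality on the one hand, with the failure of effective discreteness on the other. The $wn$-property asserts c.e.\ recognisability of membership in $S$ over \emph{arbitrary} c.e.\ sets, while effective discreteness demands a \emph{finite} certificate identifying each member of $S$. The tension is resolved by observing that $wn$-recognition exploits the full infinite c.e.\ enumeration of a candidate set, whereas effective discreteness needs only a finite portion — so the construction can admit members of $S$ sharing arbitrarily large finite initial segments (hence finitely indistinguishable by any $F^{(e)}_k$) while still diverging in their infinite tail behaviour (so that the $\sigma$-matching succeeds uniquely). Maintaining positivity and principality simultaneously with the diagonalisations $N_e$ requires a finite-injury priority scheme, which is feasible because each $N_e$ only demands finitely many commitments, and no $N_e$ forces the collapse or uncollapse of pairs already settled by higher-priority requirements.
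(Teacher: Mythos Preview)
The paper does not supply its own proof of this proposition: it is quoted from V'jugin's 1973 paper and used as a black box in Subsection~\ref{subsec_without_Rice_Shapiro}. There is therefore no paper-side argument to compare your attempt against.

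As to the proposal itself, the overall shape---a priority construction building the family, its numbering, the $wn$-witness and the equality relation simultaneously, while diagonalising against every strongly computable sequence $\{F^{(e)}_k\}_k$---is the natural approach, and the requirements $N_e$ you describe (put some $F^{(e)}_k$ into two members of $S$, then keep them distinct by a fresh tag) are indeed finitary and amenable to finite injury. However, there are two real gaps. First, your description of $\sigma$ is not a procedure: having $\sigma(e)$ ``search for the first stage at which $W_e$ appears to coincide with one of the current approximations $A^s_n$'' commits prematurely, since any finite-stage match can be broken later, so in general $W_{\sigma(e)}\neq W_e$ even when $W_e\in S$. A $wn$-witness must output the index of a set built \emph{dynamically} from the enumeration of $W_e$, guaranteed to land in $S$ whenever defined and to equal $W_e$ whenever $W_e\in S$; that forces $S$ to carry enough structure that one can ``follow $W_e$ inside $S$'' (compare Subsection~\ref{subsec_from_tree_to_top_space}, where $\sigma$ tracks the $KB$-least vertex compatible with $W_e$). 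You have not said what plays this role, nor how tagging during an $N_e$-action interacts with inputs $e'$ that $\sigma$ was already tracking toward the untagged member. Second, there is a tension between your two sources of principality and positivity: if $\gamma$ is the canonical principal numbering $n\mapsto W_{\sigma(h_0(n))}$ coming from the $wn$-structure, you do not get to \emph{declare} when $\gamma(n)=\gamma(m)$---positivity of that numbering must be \emph{proved} from the construction of $S$; conversely, if you build $\gamma$ directly with an explicitly enumerated $\mathrm{Eq}$, principality needs a separate argument. The sketch conflates these two routes.
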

Let us fix a $wn$-family $S\subseteq \mathcal{P}(\omega)$   that is not effectively discrete and its positive  principal computable numbering $\gamma:\omega\to S$. Let $c:\omega\times\omega\to\omega$ be  the Cantor pairing function.
Define $S^\ast=\{c(A\times B)\mid A,\, B\in S\}$.
\begin{lem}\label{S_ast_wn-family}
The set $S^\ast$ is a $wn$-family.
\end{lem}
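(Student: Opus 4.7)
The plan is to exhibit a witnessing partial computable function $\sigma^\ast$ by combining projections with the function $\sigma$ already given for $S$. First I would observe that for any c.e. set $W_n$, the projections $P_n=\{x\mid \exists y\, c(x,y)\in W_n\}$ and $Q_n=\{y\mid \exists x\, c(x,y)\in W_n\}$ are c.e.\ uniformly in $n$, so the s-m-n theorem supplies total computable functions $p,q:\omega\to\omega$ with $W_{p(n)}=P_n$ and $W_{q(n)}=Q_n$. Similarly there is a total computable $h:\omega^2\to\omega$ with $W_{h(a,b)}=c(W_a\times W_b)$.

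I would then define
\begin{align*}
\sigma^\ast(n)\simeq h\!\bigl(\sigma(p(n)),\sigma(q(n))\bigr),
\end{align*}
so that $\sigma^\ast(n)$ converges precisely when both $\sigma(p(n))$ and $\sigma(q(n))$ do. Condition (i) of Definition~\ref{wn_family} is then immediate: whenever $\sigma^\ast(n){\downarrow}$, part (i) for $\sigma$ gives $W_{\sigma(p(n))},W_{\sigma(q(n))}\in S$, whence $W_{\sigma^\ast(n)}=c(W_{\sigma(p(n))}\times W_{\sigma(q(n))})\in S^\ast$.

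For condition (ii), suppose $W_n\in S^\ast$, say $W_n=c(A\times B)$ with $A,B\in S$. The key step is to recover $A$ and $B$ from the projections. In the generic case $A,B\neq\emptyset$ we have $P_n=A$ and $Q_n=B$, so $W_{p(n)}=A\in S$ and $W_{q(n)}=B\in S$; property (ii) of $\sigma$ yields $\sigma(p(n)){\downarrow}$ with $W_{\sigma(p(n))}=A$, and similarly for $q(n)$, giving $W_{\sigma^\ast(n)}=c(A\times B)=W_n$. In the degenerate case $W_n=\emptyset$, a decomposition $W_n=c(A\times B)$ with $A,B\in S$ forces $\emptyset\in S$, so both projections $P_n=Q_n=\emptyset$ also lie in $S$, and the same argument gives $W_{\sigma^\ast(n)}=c(\emptyset\times\emptyset)=\emptyset=W_n$.

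The argument is essentially a bookkeeping exercise; the only subtlety worth being careful about is the empty-set corner case just discussed, where the projections fail to reconstruct $A$ and $B$ individually but still reconstruct the empty product, and this is precisely consistent with the only situation in which the empty set can belong to $S^\ast$. Neither positivity nor the principality of the given numbering $\gamma$ of $S$ is used; only the bare $wn$-family property of $S$ transfers to $S^\ast$.
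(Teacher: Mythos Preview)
Your proof is correct and follows essentially the same route as the paper: both take the two coordinate projections of $W_n$, apply the witnessing function $\sigma$ for $S$ to each, and recombine via $c(\,\cdot\times\cdot\,)$; your $p,q,h$ are exactly the paper's $a,b$ and the product-coding step. You are in fact more careful than the paper, which simply asserts that the verification is clear, whereas you explicitly handle the degenerate empty case.
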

\begin{proof}

Let  $\sigma:\omega\to\omega$ be a partial computable function for the $wn$-family $S$ that satisfies the conditions of Definition~\ref{wn_family}.
The  partial computable function $\sigma^\ast:\omega\to\omega$ is defined by the equation
$W_{\sigma^\ast}(n)=c(W_{\sigma(a(n))}\times W_{\sigma(b(n))})$, where
$W_{a(n)}=\{x\mid \exists y\, c(x,y)\in W_n\}$ and
$W_{b(n)}=\{x\mid \exists x\, c(x,y)\in W_n\}$.
It is clear that  $\sigma^\ast$ satisfies the conditions of Definition~\ref{wn_family}. Hence
$S^\ast$ is a $wn$-family.
\end{proof}
\begin{prop}\label{ce_not_eff_open}
Let $K=\{c(A\times A)\mid A\in S\}$. Then $Ix(K)$ is c.e. but $K$ is not effectively open in $S^\ast$.
\end{prop}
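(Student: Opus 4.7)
The plan is to handle the two halves of the proposition separately.

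For the c.e.-ness of $\Ix(K)$, I would unwind the $wn$-family witness $\sigma^{\ast}$ built in Lemma~\ref{S_ast_wn-family}. Writing $\bar{\gamma}^{\ast}$ for the principal computable numbering of $S^{\ast}$ extracted from $\sigma^{\ast}$, each value $\bar{\gamma}^{\ast}(n)$ has the form $c(A_n \times B_n)$ for sets $A_n, B_n \in S$ whose indices (as elements of $S$ under $\gamma$) are computable in $n$. Then $n \in \Ix(K)$ iff $A_n = B_n$, and this relation is c.e.\ because the positive principal computable numbering $\gamma$ makes equality of elements of $S$ a c.e.\ relation on their indices.

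For the failure of effective openness I would argue by contradiction. Suppose $K$ is effectively open in $S^{\ast}$. Viewing $S^{\ast}$ as a subspace of $\mathcal{P}(\omega)$ with the Scott topology and invoking the characterization of effective openness stated just after the definition of effective discreteness, there is a strongly computable family $\{F_n\}_{n \in \omega}$ of finite subsets of $\omega$ satisfying $(\forall C \in S^{\ast})(C \in K \leftrightarrow (\exists n)\, F_n \subseteq C)$. Denoting the two Cantor projections by $\pi_1$ and $\pi_2$, I would set $G_n = \pi_1(F_n) \cup \pi_2(F_n)$; this is again a strongly computable family of finite subsets of $\omega$.

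My claim is that $\{G_n\}_{n \in \omega}$ witnesses that $S$ is effectively discrete, contradicting the choice of $S$. The covering property is immediate: for $A \in S$, the diagonal element $c(A \times A) \in K$ must dominate some $F_n$, which forces $G_n \subseteq A$. The separation property is the substantive step, and I expect it to be the main conceptual obstacle: if $A, B \in S$ both contain $G_n$, then $A \supseteq \pi_1(F_n)$ and $B \supseteq \pi_2(F_n)$, so $F_n \subseteq c(A \times B) \in S^{\ast}$, and the characterization forces $c(A \times B) \in K$, hence $A = B$. The conceptual point is that although the witnesses $F_n$ are extracted from diagonal elements $c(A \times A)$, the contradiction is produced by applying effective openness to the \emph{off-diagonal} element $c(A \times B)$; taking $G_n$ to be the union of both projections is precisely what ensures that this off-diagonal element still dominates $F_n$.
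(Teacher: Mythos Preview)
Your proposal is correct and follows essentially the same route as the paper: the paper also derives c.e.-ness of $\Ix(K)$ from positivity of $\gamma$, and for the failure of effective openness it forms, for each witnessing finite set $D_i$, the set $D_{h(i)}=\{x\mid(\exists y)(c(x,y)\in D_i\vee c(y,x)\in D_i)\}$, which is exactly your $G_n=\pi_1(F_n)\cup\pi_2(F_n)$, and concludes effective discreteness of $S$ via the same off-diagonal argument $D_i\subseteq c(D_{h(i)}\times D_{h(i)})\subseteq c(A\times B)$. Your write-up is in fact more explicit than the paper's about why the off-diagonal element $c(A\times B)$ is the one that yields the contradiction.
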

\begin{proof}
Since $\gamma$ is a positive computable numbering, $Ix(K)$ is computably enumerable.
Assume that $K$ is  effectively open in $S^\ast$, i.e.,  for an appropriate
c.e. $I\subseteq \omega$, $K=\{C\in S^\ast\mid (\exists i\in I)\, C\supseteq D_i\}$, where
$ D_i$ is  a  finite set. Therefore, by the definition of $K$, there exists
a computable function $h:\omega\to\omega$ such that $D_{h(i)}=\{x\mid (\exists y\in\omega) (c(x,y)\in D_i \vee c(y,x)\in D_i) \}$.
  So, $C\supseteq c(D_{h(i)}\times D_{h(i)})\supseteq D_i$. Then, for $A,\, B\in S$, $A=B\leftrightarrow (\exists i\in\omega) A,\, B \supseteq D_{h(i)}.$ Therefore, $S$ is effectively discrete. We get  a contradiction.
\end{proof}

\begin{thm}
There exists an \ee $T_0$-space $(X,\tau,\alpha)$ such that $\SX$ is a $wn$-family but  the Rice-Shapiro theorem does not hold for the computable elements. \end{thm}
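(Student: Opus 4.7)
The plan is to package the two previously established facts about $S^{\ast}$ and $K=\{c(A\times A)\mid A\in S\}$ (Lemma~\ref{S_ast_wn-family} and Proposition~\ref{ce_not_eff_open}) through the construction of Section~\ref{subsec_from_wn_to_top_space}, which converts a $wn$-family into an \ee $T_0$-space while preserving the relevant effectivity data.

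First, I would apply Proposition~\ref{constr_S_to_X} to the $wn$-family $S^{\ast}\subseteq\mathcal{P}(\omega)$ given by Lemma~\ref{S_ast_wn-family}. This yields an \ee $T_0$-space $(X,\tau,\alpha):=(X_{S^{\ast}},\tau_{S^{\ast}},\alpha_{S^{\ast}})$ in which every element is computable (so $X_c=X$) and $\widehat{X}$ is a $wn$-family. Thus $X$ already satisfies Req~1 of Definition~\ref{d_modular_space} and is a genuine candidate for a space on which a Rice-Shapiro-style characterisation could be asked.

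Next I would transfer the critical subset. Under the natural identification $X\leftrightarrow S^{\ast}$ coming from the principal computable numbering of the $wn$-family, let $\widetilde K\subseteq X_c$ correspond to $K=\{c(A\times A)\mid A\in S\}\subseteq S^{\ast}$. By Proposition~\ref{ce_not_eff_open}, the index set $Ix(K)$ in $S^{\ast}$ is computably enumerable while $K$ is not effectively open in $S^{\ast}$ (viewed as a subspace of $\mathcal{P}(\omega)$ with the Scott topology, which is precisely the topology used in Proposition~\ref{constr_S_to_X}). By clause~(4) of Proposition~\ref{constr_S_to_X}, the transfer preserves both effective openness and the c.e.\ character of index sets, so $Ix(\widetilde K)$ is c.e.\ while $\widetilde K$ fails to be effectively open in $X_c$.

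This exhibits an \ee $T_0$-space $(X,\tau,\alpha)$ with $\widehat{X}$ a $wn$-family and a subset $\widetilde K\subseteq X_c$ whose index set is c.e.\ but which is not effectively open; hence the conclusion of the generalised Rice-Shapiro theorem (Theorem~\ref{RSH}) fails for $X$. The main subtlety I anticipate lies in making the last sentence fully rigorous, namely confirming that clause~(4) of Proposition~\ref{constr_S_to_X} really does translate non-effective-openness of $K$ in the Scott topology on $S^{\ast}$ into non-effective-openness of $\widetilde K$ in $X_c$; the bijective correspondence $f:[i]_{\sim}\mapsto\gamma^{\ast}(i)$ constructed in the proof of Proposition~\ref{constr_S_to_X} is a homeomorphism and preserves the base numberings up to a computable transformation, so effective openness is an invariant in both directions, which is exactly what is needed. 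Once this is noted the argument closes.
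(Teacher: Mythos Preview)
Your proposal is correct and follows exactly the paper's approach: the paper's proof is the single sentence ``The claim follows from Proposition~\ref{constr_S_to_X} and Proposition~\ref{ce_not_eff_open},'' and you have simply unpacked that citation by applying the construction of Section~\ref{subsec_from_wn_to_top_space} to the $wn$-family $S^{\ast}$ (Lemma~\ref{S_ast_wn-family}) and transferring the set $K$ of Proposition~\ref{ce_not_eff_open} via clause~(4). Your extra care in noting that the homeomorphism preserves effective openness in both directions is a useful clarification of what clause~(4) is being used for, but it does not deviate from the intended argument.
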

\begin{proof}
The claim follows from Proposition~\ref{constr_S_to_X} and Proposition~\ref{ce_not_eff_open}.
\end{proof}

\subsection{Non-dcpo}\label{subsec_non_dcpo}
In this subsection we show that the class of modular $T_0$-spaces is wider than the  weakly effective $\omega$--continuous domains.

\begin{thm} There exists a modular $T_0$-space $(X,\tau,\alpha)$ such that $(X,\leq)$ is  not a dcpo, where $\leq$ is the specialisation order.

\end{thm}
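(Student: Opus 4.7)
The plan is to apply the construction of Subsection~\ref{subsec_from_tree_to_top_space} to a well-chosen tree. By Proposition~\ref{X_T}, whenever $T$ is a computable tree without computable infinite paths, the space $X_T=[T]^{fin}_p$ is a modular $T_0$-space whose specialisation order $\leq$ coincides with $\sqsubseteq$. So it suffices to exhibit such a $T$ that has at least one (necessarily non-computable) infinite path, and then to verify that $(X_T,\sqsubseteq)$ fails to be a dcpo.

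For the tree itself I would invoke the classical Kleene-style existence result: there is a computable binary tree $T\subseteq 2^{<\omega}$ whose set of infinite branches is a non-empty $\Pi^0_1$ class containing no computable member, obtained for example from two recursively inseparable c.e.\ sets. Such a $T$ satisfies the hypothesis of Proposition~\ref{X_T} and also contains at least one infinite path $f\in 2^\omega$.

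To refute the dcpo property, fix such an $f$, let $x_n\in T$ denote the prefix of $f$ of length $n$, and consider
\[C=\{p_{x_n}\mid n\in\omega\}\subseteq X_T.\]
Each $p_{x_n}$ is a finite partial path and $p_{x_n}\sqsubseteq p_{x_{n+1}}$, so $C$ is a $\sqsubseteq$-chain and hence directed. Any upper bound of $C$ in $X_T$ would have to be a \emph{finite} partial path containing every $p_{x_n}$, which is impossible because $|p_{x_n}|$ is unbounded. Thus $C$ has no supremum (and indeed no upper bound) in $X_T$, and so $(X_T,\leq)$ is not a dcpo.

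The only non-routine ingredient is producing the tree $T$ with the required combination of properties, but this is a well-known consequence of the existence of recursively inseparable c.e.\ sets. Everything else is immediate from Proposition~\ref{X_T} and the observation that elements of $X_T$ are by definition finite, so that any infinite path built from $T$ produces a directed set whose ``would-be'' supremum lies outside $X_T$.
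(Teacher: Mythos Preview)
Your proposal is correct and follows essentially the same approach as the paper: pick a computable tree with no computable infinite paths but at least one (non-computable) infinite path, apply Proposition~\ref{X_T} to obtain a modular $T_0$-space $X_T$, and observe that the chain of finite prefixes of an infinite path witnesses the failure of directed completeness. The paper's proof is terser---it simply asserts that the existence of an infinite non-computable path makes $(X_T,\leq)$ fail to be a dcpo---whereas you spell out both the provenance of $T$ (via recursively inseparable c.e.\ sets) and the explicit unbounded chain, but the underlying argument is the same.
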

\begin{proof}
Let $T$ be  a computable tree without infinite computable paths and  with at least one infinite non-computable path. Consider $X_T$ from the Section~\ref{subsec_from_tree_to_top_space}.
Since $T$ has an infinite non-computable path, $(X_T,\leq)$ is not a  dcpo.
\end{proof}

\section*{Acknowledgement}
 We wish to acknowledge fruitful discussions with Dieter Spreen on effective topological spaces.

\end{document}